\documentclass[journal]{IEEEtran}
\usepackage{amsmath,amssymb}
\usepackage[dvips]{graphicx}
\usepackage{amsfonts}
\usepackage[mathscr]{eucal}
\usepackage{latexsym}
\usepackage{amsthm}
\usepackage{exscale}
\usepackage[mathscr]{eucal}
\usepackage{bm}
\usepackage[dvipsnames]{color}
\usepackage{cases}
\usepackage{epsfig}
\usepackage[center,small]{caption}
\usepackage{algorithm}
\usepackage{algorithmic}
\usepackage[verbose,nospace,sort]{cite}
\usepackage{tabularx}
\usepackage{multirow}
\usepackage{balance}
\usepackage{url}
\usepackage{graphicx}
\usepackage{subcaption}
\usepackage[export]{adjustbox}

\scrollmode

\newtheorem{lemma}{Lemma}

\hyphenpenalty=2000
\pagenumbering{gobble}

\IEEEoverridecommandlockouts


\title{Channel Knowledge Map for Cellular-Connected UAV via Binary Bayesian Filtering} 

\author{ Yuhang~Yang,~
	 Xiaoli~Xu,~\IEEEmembership{Member,~IEEE,}
	 Yong~Zeng,~\IEEEmembership{Senior Member,~IEEE,}
	 Haijian~Sun,~\IEEEmembership{Member,~IEEE,}
  and~Rose~Qingyang~Hu,~\IEEEmembership{Fellow,~IEEE}
	\thanks{
		Y. Yang, X. Xu and Y. Zeng are with National Mobile Communications Research Laboratory, Southeast University, Nanjing 210096, China. Y. Zeng is also with the Purple Mountain Laboratories,
		Nanjing 211111, China (e-mails: \{yuhang\_yang, xiaolixu,  yong\_zeng\}@seu.edu.cn). (Corresponding author: Yong Zeng.)
	}
	\thanks{H. Sun (e-mail: hsun@uga.edu) is with School of Electrical and Computer Engineering, University of Georgia, Athens, GA, USA.}
	\thanks{Rose~Qingyang~Hu (e-mail: rose.hu@usu.edu) is with Department of Electrical and Computer Engineering, Utah State University, Logan, UT, USA.}
	\thanks{This work was supported by the National Natural Science Foundation of China with grant number 62071114 and 62101118.}
	\thanks{Part of this work has been presented at the 2023 IEEE ICC Workshops,
		Rome, Italy in June 2023 \cite{yangsrapm}.}
}

%
%

\begin{document}

\maketitle

\begin{abstract}
Channel knowledge map (CKM) is a promising technology to enable environment-aware wireless communications and sensing.
Link state map (LSM) is one particular type of CKM that aims to learn the location-specific line-of-sight (LoS) link probability between the transmitter and the receiver at all possible locations, which provides the prior information to enhance the communication quality of dynamic networks. 
This paper investigates the LSM construction for cellular-connected unmanned aerial vehicles (UAVs) by utilizing both the expert empirical mathematical model and the measurement data. Specifically, 
 we first model the LSM as a binary spatial random field and its initial distribution is obtained by the empirical model. 
Then we propose an effective binary Bayesian filter to sequentially update the LSM by using the channel measurement. To efficiently update the LSM, we establish the spatial correlation models of LoS probability on the location pairs in both the distance and angular domains, which are adopted in the Bayesian filter for updating the probabilities at locations without measurements. 
Simulation results demonstrate the effectiveness of the proposed algorithm for LSM construction, which significantly outperforms the benchmark scheme, especially when the  measurements are sparse.


\end{abstract}

\begin{IEEEkeywords} \textbf{channel knowledge map}, 
	 \textbf{probabilistic link state mapping}, \textbf{binary Bayesian filter}, \textbf{cellular-connected UAV} 
\end{IEEEkeywords}
\pagestyle{plain}

\section{Introduction}

With future wireless networks trying to exploit high-frequency bands such as millimeter wave (mmWave) and TeraHertz (THz) bands, the blockage effect of wireless channel becomes more serious, especially for the air-to-ground (A2G) channel which largely relies on the line-of-sight (LoS) link \cite{zeng2019accessing}.  
On the other hand, compared with the ground links, the A2G channel is more predictable due to the strong coupling between the channel and the physical environment, together with the higher predictability of the aerial nodes' locations such as unmanned aerial vehicles (UAVs) \cite{Zeng2021,LAP}. To ensure safe flight and support a variety of critical missions, UAVs are typically equipped with communication and sensing devices. Besides, LoS A2G links are usually exploited to achieve high-rate and low-latency communications, such as video streaming communication \cite{10173754}, relay deployment optimization \cite{zheng2024geography}, and predictive beamforming \cite{10278781,zeng2023ckm}. 
Specifically, \cite{10278781} and \cite{zeng2023ckm} pointed out that the prior information about LoS link can enhance the sensing-assisted non-LoS (NLoS) identification by the echo signal and enable the predictive beamforming of mmWave communication links.
In addition, LoS link information is also shown to be useful for network localization \cite{mazuelas2018soft, wang2023indoor} and navigation \cite{yin2022millimeter} etc,  if such information can be shared within the network. 
On the other hand, the rapid growth of aerial users and the powerful data mining capabilities of wireless networks make it possible to realize 
\textit{environment-aware communication and sensing}
 for UAV using location-specific historical data. 
To this end, one promising technique is to leverage \textit{channel knowledge map (CKM)} \cite{Zeng2021,zeng2024tutorial}, which is a site-specific channel knowledge database with abundant location-specific wireless channel data and/or physical environment sensing data.
CKM is able to provide the intrinsic channel knowledge that is independent of the transmitter and receiver activities, based on the virtual or physical locations of the mobile terminals. For example, time of arrival (TOA) and angle of arrival (AOA) \cite{wuckm} between transmitter and receiver can be regarded as the typical channel knowledge.

However, efficient CKM construction is a non-trivial task. Most research on the construction of CKM focuses on channel gain map (CGM)  \cite{  chen2017efficient,zeng2021simultaneous,huang2021simultaneous,denkovski2012reliability,dall2011channel,li2022channel,chen2017learning,xu2024much}, while other categories of CKM remain less explored. 
For example, link state is one of the simplest channel knowledge, as it indicates the LoS/NLoS condition of the communications link between the transmitter and receiver. 
In \cite{yangsrapm}, we have proposed to construct link state map (LSM), CGM and physical map simultaneously by fusing sensing and channel data. However, acquiring precise sensing data requires dedicated sensing equipment, which is usually costly. It is desired to explore the efficient LSM construction algorithms based on limited and readily-available communication measurements.
As shown in Fig.~\ref{F:Introsummary}, there are two straightforward approaches for LoS link prediction, namely \textit{physical map-based } \cite{stump2011visibility,casarini2016updating,yin2022wireless,esrafilian2021map,linsalata2022map} and \textit{stochastic channel models-based} \cite{ortega2021nlos,schwarz2016gaussian,huang2020machine}. 
For the former, if we know the 3D physical map, we can compute the LSM by determining whether the line segment between the transmitter and receiver is blocked by obstacles or not. For the latter, the LSM can be constructed based on the stochastic channel models. Such models only exploit the large-scale environment information, such as the building density, environment type, user equipment (UE) height etc.
As a result, this method usually gives very coarse prediction, which can not meet the stringent accuracy requirements in some specific scenarios.

\begin{figure}[htb]
	\centering
	\includegraphics[scale=0.5]{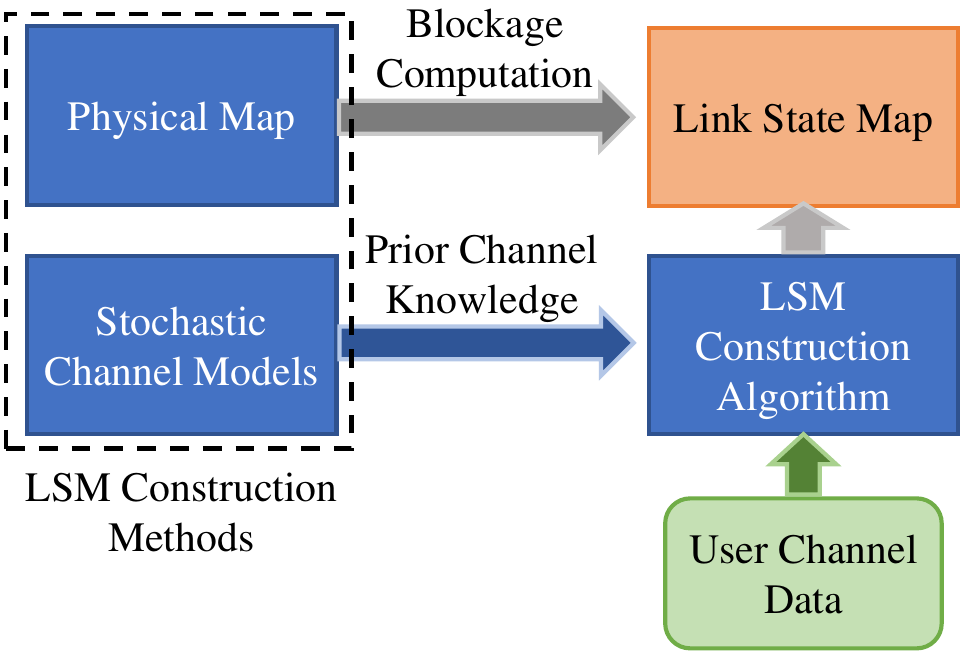}
	\caption{An overview of the LSM construction methods for cellular-connected UAVs.}
	\label{F:Introsummary}
\end{figure}

In the first approach, the physical map would be constructed when the user or base station (BS) has sufficient sensors, such as cameras and/or LiDAR.
With the known physical map, we can compute the blockage of LoS link. However, the computation cost from the physical map to LSM may be unaffordable if the map size is large \cite{bahoo2020computing,bahoo2019time}. 
The authors in \cite{stump2011visibility,casarini2016updating} tried to construct the LSM between any two locations based on the constructed physical map with the ranging sensors by performing blockage computation.  
In addition, they pointed out that the multi-robot can improve their physical map construction efficiently by optimizing the strategy of cooperating sensing if the robot network builds the LSM when performing missions online. 
Furthermore, the authors in \cite{yin2022wireless} addressed the problem of how to estimate the propagation when only a portion of the map has been reconstructed. 
The link states for those transmitter-receiver pairs outside the observed environment are estimated based on pre-trained machine learning algorithm.



For the second approach, LoS/NLoS identification based on stochastic models has been explored for decades. It is worth mentioning that extensive works \cite{3GPP,LAP,holis2008elevation} 
had given a universal probabilistic model, which provided a LoS probability with respect to high-level environmental parameters, such as the transmitter and receiver location, communication environment etc. For example, 3GPP TR 36.777  \cite{3GPP} provided a distance-dependent LoS probability model
between cellular-connected UAV and ground base station (GBS).
In addition, based on the Urban Micro (UMi) scenario of the 3GPP 3D channel model, reference \cite{schwarz2016gaussian} converts these correlated stochastic channel model into deterministic binary LoS/NLoS maps by applying appropriate thresholding. Reference \cite{ortega2021nlos} formulated a LoS/NLoS binary Bayesian classification problem for LSM construction and analyzed the performance of different estimators under the random geometry theory given a set of random blocking elements with a specific spatial density. In addition, the authors in \cite{huang2020machine} investigated the effectiveness of machine learning-enabled LoS/NLoS identification in dynamic environments. 



However, both the physical map-based and stochastic channel models-based LSM constructions face certain limitations.  On one hand, accurate physical map is difficult to obtain, and using the high accuracy environment sensor for constructing physical map usually incurs large cost. In addition, the computational overhead from occupancy grid map to LSM may be enormous. On the other hand, the main challenge of LSM construction based on stochastic channel model is that the wireless environment may vary significantly even within the same site, and hence the stochastic model may not match with the real environment. 
Besides, the real-time update of LSM is crucial to reflect the environment dynamics. The stochastic models are useful for LSM initialization and the channel measurements are important to capture the local environment characteristics.

LSM construction and update can be modeled as the binary estimation problem, which can be addressed by
binary Bayesian filter.
Binary estimation problems of this type arise if we intend to estimate a fixed binary quantity in the environment from a sequence of sensor measurements. These binary estimation problems have been investigated mostly in robotics and sensing \cite{thrun2002probabilistic}.
For example, during radar sensing, we might want to know if a target exists or not. Another example of binary Bayesian filters is that the robot determines if a door is open or closed by the ranging sensor.
 To make full use of the well-established LoS models and the measurement data, we propose the probabilistic LSM construction with radio propagation semantics. 
The main contributions of the paper can be summarized as follows:
\begin{itemize}
	\item 
	First, we consider the scenario where a cellular-connected UAV communicates with the GBS in an urban environment.
	We model the LSM as a binary spatial random field, and each binary random variable at a certain location represents the existence of LoS link with GBS.
	Then, we propose a novel sequentially probabilistic LSM construction approach based on a limited number of radio measurements and binary Bayesian filter, to update the prior LSM that has an empirical probability distribution given by expert knowledge. 
%
	\item Second, to further resolve the sequential update of LSM based on radio measurements and binary Bayesian filter, 
	we show that the LSM update exploiting each radio measurement can be divided into two cases, \textit{the measurement location} and \textit{unmeasured location}.
	Then, we derive the spatial correlation expression for the LoS probability. 
	Finally, based on this spatial correlation and a binary Bayesian filter, the LSM update is enhanced in each direction with measurements and in the relevant direction without measurements.
	We conduct extensive numerical simulations to validate the superiority of the proposed approach in probabilistic link state mapping. The simulation results demonstrate that the proposed map update algorithm can significantly improve the map construction accuracy when comparing with the benchmark such as $ K $-nearest neighbours (KNN) interpolation.
\end{itemize}

The rest of the paper is organized as follows. The LSM construction problem is formulated in Section II. Section III presents the spatial correlation of LoS link condition and provides the binary Bayesian filter-based sequential probabilistic LSM construction methods by radio measurements when the channel model parameters are known. 
The proposed approaches are validated by extensive numerical simulations in section IV, and finally, Section V summarises the key results and main conclusions of the paper.

\section{System model}\label{sec:System model}
As shown in Fig. \ref{F: LSM construction via CL}, we consider the LSM construction problem for cellular-connected UAV. 
The GBS is assumed to be located at the origin and the space of interest is denoted by the cubic $ \mathcal{X} = [0, W]\times[0, L] \times [0,H]$, where $H$ is the UAV flying height and it is assumed to be larger than the maximum building heights. To assist the UAV service deployment and trajectory design, we want to build a probabilistic LSM, which returns the probability for the existence of LoS link at a given location within the UAV flying plane. 
Specifically, we model the LSM $ \mathcal{M} $ as a \textbf{\textit{spatial random field}} \cite{christakos2012random, brett2003introduction}, i.e $ \mathcal{M}=\{ l(\mathbf{x})\}_{\mathbf{x} \in \mathcal{X}_h } $. Each $ l(\mathbf{x}) $ is a binary random variable representing the existence of LoS link at location $ \mathbf{x} \in \mathcal{X}_h $, and $l(\mathbf{x}) =1$ indicates that the LoS link exists, otherwise, $ l(\textbf{x}) = 0$, where  $ \mathcal{X}_h = [0,W] \times [0,L] \times H$. 

\begin{figure}[htb]
	\centering
	\includegraphics[scale=0.5]{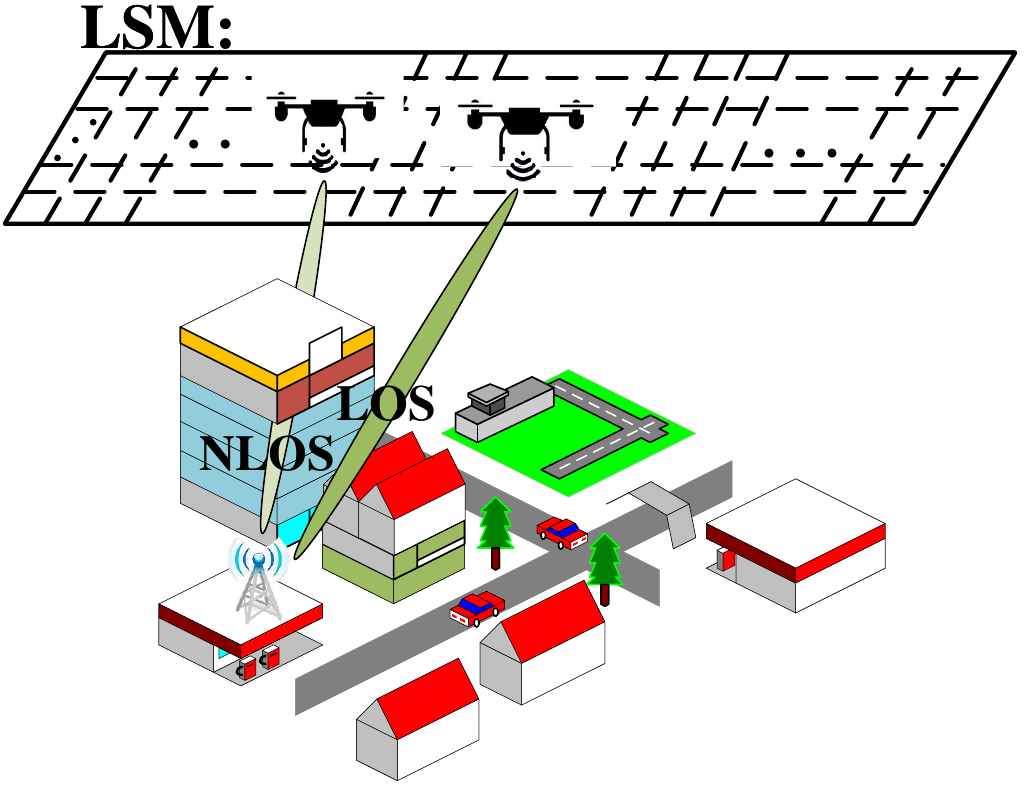}
	\caption{LSM construction for cellular-connected UAVs.}
	\label{F: LSM construction via CL}
\end{figure}

The probability of LoS link is related with the UAV elevation angle and the propagation environment, which motivates various LoS link probability models. For example, a commonly adopted model is \cite{holis2008elevation}   
\begin{align}
\Pr( \mathbf{x} ) = a - \frac{a-b}{1+(\frac{\phi(\mathbf{x})-c}{d})^e},
 \label{eq:LoSModel}
\end{align}
where  $a $, $b $, $c $, $d $, and $e $ are environment-related parameters determined from experience model and $\phi(\mathbf{x})=\arctan\left(\frac{H}{\|\mathbf{x}\|}\right)$ is the elevation angle between the UAV and the GBS. 

The existing LoS probability model can be used to obtain a prior probability distribution of $ \mathcal{M} $, denoted by $\mathcal{M}_0  $, i.e $  \mathcal{M}_0(\mathbf{x}) \triangleq \Pr(l(\mathbf{x})=1) = \Pr(\mathbf{x}) $. However, the experience model is only based on the large-scale environment information, such as the environment type (urban or rural) or the statistic building distribution, which cannot reflect the real LoS conditions of a specific environment. To tackle this problem, some specific environment information or on-site channel measurements must be used to construct the high-quality LSM. 
In this paper, we consider updating $ \mathcal{M} $ based on a limited amount of dedicated channel measurements so that its probability distribution is as close as possible to the ground-truth LSM.




\vspace{-1ex}
\subsection{Measurement model}
The measured channel information obtained by the UAV is related with the measurement location, which can be generally modeled as 
\begin{equation} \label{Radio measurements Model}
	z(\mathbf{x})=g(\mathbf{x})+n_0,
\end{equation}
where $z(\mathbf{x})$ is the measured channel information at the location $ \mathbf{x}$, the function $g(\cdot)$ describes the relation between the location and the measured results, and $n_0{\sim}N(0,\sigma^{2})$ is the measurement noise. For example, the channel gain in dB scale can be expressed as
\begin{align}
    g(\mathbf{x})=\beta_{c}+10\alpha_{c}\log_{10}(\|\mathbf x\|)+\eta_c, \label{eq:chgain}
\end{align}
where $c\in\{0,1\}$ represents the NLoS and LoS propagation environments, respectively, with $ c = 1$ corresponding to LoS, $ c = 0  $ to NLoS. $\{\alpha_c,\beta_c\}$ are the path loss parameters, satisfying  $\beta_{1} > \beta_{0} $ and $ \alpha_{1}>\alpha_{0}$, the channel fading $\eta_c$ is the random variable distributed according to $\mathcal{N}(0, \sigma_c^2)$.  

 The measured channel gain $z(\mathbf{x})$ is then a mixed Gaussian random variable as shown in Fig.~\ref{F:RSS}.  
 The conditional probability distributions are given by
\begin{align}
&p_{z|c}(\mathbf{x})\triangleq\Pr(z(\mathbf{x})|l(\mathbf{x}) = c)\nonumber\\
&=\frac{1}{\sqrt{2\pi(\sigma^2+\sigma_c^2)}}\exp\left(-\frac{(z(\mathbf{x})-\mu_c(\mathbf{x}))^2}{2(\sigma^2+\sigma_c^2)}\right), \label{eq:zxdist}
\end{align}
where $\mu_c(\mathbf{x})=\beta_{c}+10\alpha_{c}\log_{10}(\|\mathbf x\|)$. 
The distribution of $z(\mathbf{x})$ in \eqref{eq:zxdist} can be used to derive the LoS posterior probability $ \Pr(l(\mathbf{x})=c|z(\mathbf{x})) $ only measuring $ z(\mathbf{x}) $ at $\mathbf{x}$, called \textit{inverse measurement model} \cite{thrun2002probabilistic}. 
  However, to accurately construct the whole LSM, more measurements and the spatial correlation among the LoS link status  have to be considered. 

\begin{figure}[htb]
	\centering
	\includegraphics[scale=0.4]{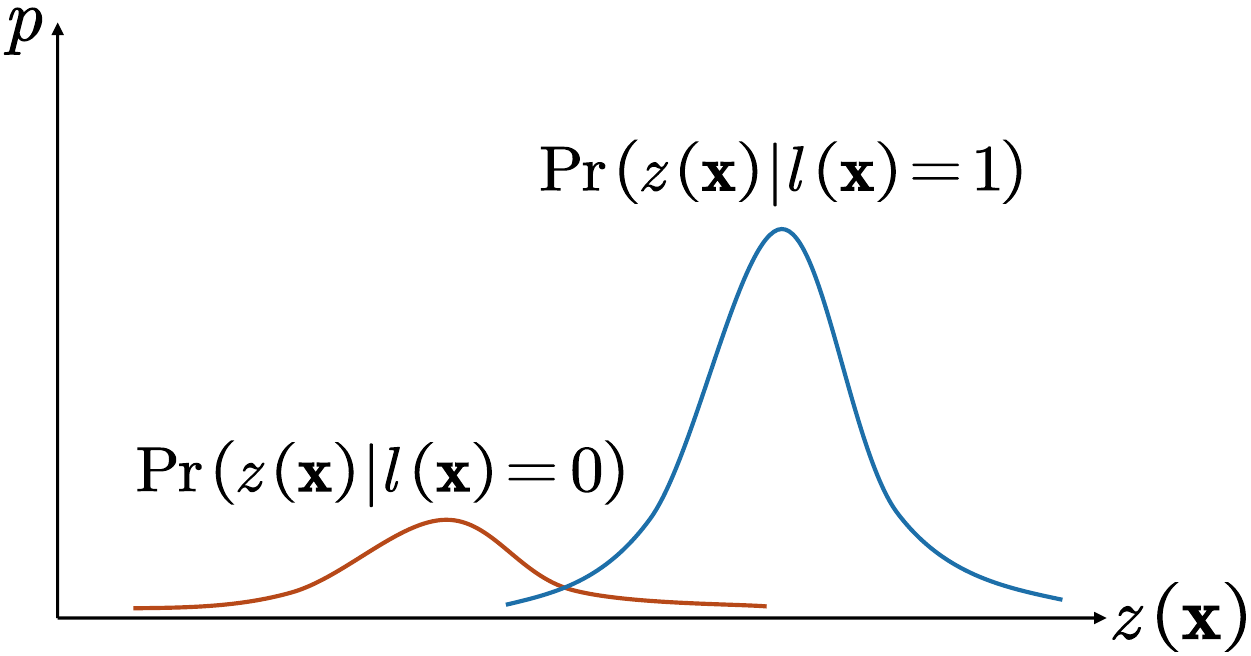}
	\caption{The distribution of channel measurements conditioned on the LoS link status.}
	\label{F:RSS}
\end{figure}



\subsection{An Overview of Binary Bayesian Filter}
This subsection aims to give a brief overview on binary Bayesian filter for sequential binary random variable update, by introducing the key notations
to be used in the sequel of this paper. The readers are referred to the classic textbook \cite{thrun2002probabilistic} for
a more comprehensive description.


Binary Bayesian filter is a useful mathematical tool for the estimation problem with binary 
state that does not change over time. 
Denote $ x $ as the binary random variable, whose state has only two cases.
When the binary state is static, the belief $ bel_t(x) $ only depends on the measurements, i.e.
	\begin{align}
		bel_t(x) = p(x|\mathbf{z}_{1:t},\mathbf{u}_{1:t}) = p(x|\mathbf{z}_{1:t}),
	\end{align}
where $ \mathbf{z}_{1:t} $ and $ \mathbf{u}_{1:t} $ are measurements and environment state, respectively. The lack of a time index for the state $ x $ reflects that the binary state is static.
To avoid the truncation problems that arise for probabilities close to $ 0 $ or $ 1 $, the binary Bayesian filter is commonly implemented by the form of \textit{log odds ratio}.
The log odds $ l_t $ is the logarithm of this expression 
\begin{align}
	l_t := \ln \frac{p(x|\mathbf{z}_{1:t})}{1-p(x|\mathbf{z}_{1:t})}.
\end{align}
The corresponding basic update algorithm is summarized as Algorithm~\ref{alg1}.
\begin{algorithm}
	\caption{Binary Bayesian Filter}
	\label{alg1}
	\begin{algorithmic}[1]
		\STATE{\textbf{Input} $ l_{t-1},z_t $ }
		\STATE{$ l_t = l_{t-1} + \ln \frac{p(x|z_t)}{1-p(x|z_t)} - \ln \frac{ p(x)}{1-p(x) } $}
		\STATE{\textbf{return} $ l_t $}
	\end{algorithmic}
\end{algorithm}

This binary Bayesian filter in Algorithm~\ref{alg1} is updated by \textit{inverse measurement model} $ p(x|z_t)$. This is because forward measurement model $ p(z_t|x) $ would be more complex to implement than inverse measurement model $ p(x|z_t) $ in general due to the binary state having only two cases, while the state space of measurements is huge.

\subsection{LSM Construction}
Denote by $\mathbf{X}=\{\mathbf{x}_1,\mathbf{x}_2,\cdots,\mathbf{x}_N\}$ the set of locations where dedicated channel measurements are collected. The corresponding measured channel gains are denoted by the set $\mathbf{Z}=\{z_1,\cdots,z_N\}$.  
The LSM construction problem is formulated as 
\begin{equation}
\mathcal{M}=f(\mathbf{X},\mathbf{Z},\mathcal{M}_0). \label{eq:constfun}
\end{equation}

To measure the quality of the constructed LSM, we consider the ground-truth LSM, denoted by $\mathcal{T}$, which gives the true LoS link status at arbitrary locations, i.e., $\mathcal{T}(\mathbf{x})=1$ when LoS is present at the location $ \mathbf{x}$, otherwise $\mathcal{T}(\mathbf{x})=0$. Then, the quality of the constructed LSM can be evaluated by the mean absolute error (MAE) between $\mathcal{T}$ and $\mathcal{M}$, i.e.,
\begin{align}\label{eq:MAE}
		MAE(\mathcal{T};\mathcal{M}) = \frac{1}{| \mathcal{X}_h |}\sum_{\mathbf{x} \in \mathcal{X}_h} | \mathcal{T}(\mathbf{x}) - \mathcal{M}(\mathbf{x}) | d\mathbf{x}.
\end{align}

In this paper, we consider the design of LSM construction function $ f(\cdot) $ based on the given channel measurement and binary Bayesian filter, so that the MAE between the constructed LSM  $ \mathcal{M} $ and the ground-truth $ \mathcal{T} $ is minimized.
Mathematically, the problem is formulated as
\begin{equation}\label{eq:OP1}
	\begin{aligned}
	\textbf{(P1)}:	&\min_{f} \quad  MAE(\mathcal{T};\mathcal{M})\\
 &\textnormal{s.t. \eqref{eq:constfun}-\eqref{eq:MAE}}. 
	\end{aligned}
\end{equation}

Solving \textbf{P1} is rather challenging and the reason is twofold. 
First, the ground-truth LSM $\mathcal{T}$ is not available and hence the explicit expression of the objective function $ f(.) $ in \eqref{eq:MAE} can not be computed analytically.
Second, due to the complex interplay between the environment information and communication channel measurements, how to exploit the environment semantics or radio propagation rule for the spatial correlation to enhance the LSM construction is still quite challenging.
To tackle these problems, we propose to use binary Bayesian filter for LSM construction from given measurements and spatial correlation of LoS probability to enhance LSM construction.
\section{Link State Mapping based on Radio Measurements}\label{sec:Mapping}
In this section, we consider the LSM construction based on a given set of radio measurements and propose to replace the construction function $f(\cdot)$ with a sequential probability updating algorithm.  Before observing any radio measurement, the initial distribution of LSM $ \mathcal{M}_0$ is constructed based on the prior LoS link probability function, such as \eqref{eq:LoSModel}. The prior LSM is gradually refined when more radio measurements are obtained, and $\mathcal{M}_n$ denotes the LSM at the $n$-th interval, where $\mathcal{M}_n(\mathbf{x})\triangleq \Pr(l(\mathbf{x})|\mathbf{Z}_n)$ is the estimated posterior LoS probability at location $\mathbf{x}$ based on the observations $\mathbf{Z}_n=\{z_1,z_2,...,z_n\}$, $ n \leq N $.  

\subsection{Probabilistic LSM update based on binary Bayesian filter}

Directly updating the LoS probability using binary Bayesian filter  may face the problem of 0-1 value overflow \cite{thrun2002probabilistic,edition2013bayesian}. To tackle this issue, we consider the logarithmic probability ratio $ \mathcal{L}_n (\mathbf{x})$ instead, where
\begin{align}
    \mathcal{L}_n(\mathbf{x})=\ln \frac{\mathcal{M}_n(\mathbf{x})}{1-\mathcal{M}_n(\mathbf{x})}.\label{eq:defLog}
\end{align}
Equivalently, we have $ \mathcal{M}_n(\mathbf{x}) = 1/(1+e^{-\mathcal{L}_n (\mathbf{x})})$. 
The  binary Bayesian filter calibrated for for probabilistic LSM construction is presented in Lemma~\ref{lem:BBF}. 
\begin{lemma}\label{lem:BBF}
    After receiving the $n$-th radio measurement $z_n$, the logarithmic probability ratio $ \mathcal{L}_{n} (\mathbf{x})$ can be obtained as
    \begin{align}
        \mathcal{L}_n(\mathbf{x}) &= \mathcal{L}_{n-1}(\mathbf{x}) + \ln \underbrace{\frac{\Pr(l(\mathbf{x})=1|z_n)}{1-\Pr(l(\mathbf{x})=1|z_n)}}_{k(\mathbf{x},z_n)}-\mathcal{L}_0(\mathbf{x})\label{eq:recursiveLn},
    \end{align}  
   where $\Pr(l(\mathbf{x})=1|z_n)$ is the posterior LoS probability at $\mathbf{x}$ and $k(\mathbf{x},z_n)$ is the posterior probability ratio at $\mathbf{x}$ based on the single measurement $z_n$.
\end{lemma}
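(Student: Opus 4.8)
The plan is to recognize this as the standard binary Bayesian filter recursion (Algorithm~\ref{alg1}) specialized to the static link-state field, and to derive it directly from Bayes' rule. The two ingredients that make the recursion collapse into the stated form are: (i) the state $l(\mathbf{x})$ is time-invariant, so successive updates only accumulate measurements; and (ii) conditioned on $l(\mathbf{x})$, the observations $z_1,\dots,z_n$ are mutually independent, which follows from the measurement model \eqref{Radio measurements Model}, where the additive noise $n_0$ (and the fading $\eta_c$ entering \eqref{eq:zxdist}) is drawn independently at each observation. With these in hand, the identity is obtained by peeling off the most recent measurement and forming an odds ratio so that all normalization constants cancel.

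Concretely, I would start from $\mathcal{M}_n(\mathbf{x})=\Pr(l(\mathbf{x})=1\mid\mathbf{Z}_n)$ and apply Bayes' rule to separate $z_n$ from $\mathbf{Z}_{n-1}$,
\begin{align}
\Pr(l(\mathbf{x})=1\mid\mathbf{Z}_n)=\frac{p(z_n\mid l(\mathbf{x})=1,\mathbf{Z}_{n-1})\,\Pr(l(\mathbf{x})=1\mid\mathbf{Z}_{n-1})}{p(z_n\mid\mathbf{Z}_{n-1})}.\nonumber
\end{align}
By conditional independence this becomes $p(z_n\mid l(\mathbf{x})=1,\mathbf{Z}_{n-1})=p(z_n\mid l(\mathbf{x})=1)$, and I would then convert this forward likelihood into the \emph{inverse} measurement model with a second application of Bayes' rule, $p(z_n\mid l(\mathbf{x})=1)=\Pr(l(\mathbf{x})=1\mid z_n)\,p(z_n)/\Pr(l(\mathbf{x})=1)$, where $\Pr(l(\mathbf{x})=1)=\mathcal{M}_0(\mathbf{x})$ is the prior supplied by the empirical model \eqref{eq:LoSModel}. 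This substitution is exactly what makes the filter implementable through the tractable posterior $k(\mathbf{x},z_n)$ rather than the forward likelihood.

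Writing the analogous expression for the complementary hypothesis $l(\mathbf{x})=0$ and dividing, the awkward normalizers $p(z_n)$ and $p(z_n\mid\mathbf{Z}_{n-1})$ appear identically in both cases and cancel, leaving
\begin{align}
\frac{\mathcal{M}_n(\mathbf{x})}{1-\mathcal{M}_n(\mathbf{x})}=\frac{\Pr(l(\mathbf{x})=1\mid z_n)}{1-\Pr(l(\mathbf{x})=1\mid z_n)}\cdot\frac{\mathcal{M}_{n-1}(\mathbf{x})}{1-\mathcal{M}_{n-1}(\mathbf{x})}\cdot\frac{1-\mathcal{M}_0(\mathbf{x})}{\mathcal{M}_0(\mathbf{x})}.\nonumber
\end{align}
Taking logarithms and invoking the definition \eqref{eq:defLog} identifies the three factors as $\ln k(\mathbf{x},z_n)$, $\mathcal{L}_{n-1}(\mathbf{x})$, and $-\mathcal{L}_0(\mathbf{x})$, which is precisely \eqref{eq:recursiveLn}.

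The algebra is routine once this structure is fixed, so the step I would treat most carefully is conceptual rather than computational: justifying the conditional-independence factorization and, relatedly, explaining why the $-\mathcal{L}_0(\mathbf{x})$ term must appear. This subtraction is the characteristic feature of the filter and the most natural thing to get wrong; it removes the prior log-odds that would otherwise be silently re-injected each time the inverse model $\Pr(l(\mathbf{x})=1\mid z_n)$ is used, preventing the initial belief from being double-counted over successive updates. I would therefore emphasize that \eqref{eq:recursiveLn} is the per-location update rule valid whenever $z_n$ carries information about $l(\mathbf{x})$ through $k(\mathbf{x},z_n)$, deferring to the spatial-correlation development later in the section the separate question of how $k(\mathbf{x},z_n)$ is obtained at unmeasured locations.
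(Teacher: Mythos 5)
Your proof is correct and follows essentially the same route as the paper's Appendix~A: peel off $z_n$ via Bayes' rule, invoke conditional independence of measurements given $l(\mathbf{x})$, apply Bayes' rule a second time to replace the forward likelihood with the inverse measurement model, form the odds ratio so the normalizers $\Pr(z_n)$ and $\Pr(z_n\mid\mathbf{Z}_{n-1})$ cancel, and take logarithms. Your added remarks on why the $-\mathcal{L}_0(\mathbf{x})$ term prevents double-counting the prior are a useful clarification but do not change the argument.
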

\begin{proof}
Please refer to Appendix~\ref{Appendix A}.
\end{proof}

The posterior LoS probability $\Pr(l(\mathbf{x})=1|z_n)$ is also known as inverse measurement model \cite{thrun2002probabilistic}, which can be derived from channel measurement condition probability distribution in \eqref{eq:zxdist}. Specifically, at the measurement location $\mathbf{x}_n$, the posterior LoS probability is obtained by Bayesian equation as
    \begin{align}
    &\Pr(l(\mathbf{x}_n)=1|z_n)=\frac{\Pr(z_n|l(\mathbf{x}_n)=1)\Pr(l(\mathbf{x}_n)=1)}{\Pr(z_n)}\nonumber \\
        &=\frac{p_{z|1}(\mathbf{x}_n)\mathcal{M}_{0}(\mathbf{x}_n)}{p_{z|1}(\mathbf{x}_n)\mathcal{M}_{0 }(\mathbf{x}_n)+p_{z|0}(\mathbf{x}_n)(1-\mathcal{M}_{0}(\mathbf{x}_n))}, \label{eq:updateLoc}
    \end{align}
    where $p_{z|c}(\mathbf{x}),c\in\{0,1\}$ is defined in \eqref{eq:zxdist}.

For other locations $\mathbf{x}\neq \mathbf{x}_n$,  the posterior probability can be inferred from the spatial correlation of LoS probabilities. In particular, we define the spatial correlation between  $ \mathbf{x} $ and the measurement location $ \mathbf{x}_n $ by a set of probabilities $\{r_{ij}(\mathbf{x},\mathbf{x}_n)\}$, where 
\begin{align}
    r_{ij}(\mathbf{x},\mathbf{x}_n)\triangleq\Pr(l(\mathbf{x})=i|l(\mathbf{x}_n)=j), i,j\in\{0,1\}.\label{eq:spatialCorr}
\end{align}

If $\{r_{ij}(\mathbf{x},\mathbf{x}_n)\}$ is obtained, then the posterior LoS probability $\Pr(l(\mathbf{x})=1|z_n)$ for $\mathbf{x}\neq\mathbf{x}_n$ can be obtained as
    \begin{align}
     \Pr&(l(\mathbf{x})=1|z_n)= \Pr(l(\mathbf{x}_n)=1|z_n)r_{11}(\mathbf{x},\mathbf{x}_n) \nonumber\\
     &+(1-\Pr(l(\mathbf{x}_n)=1|z_n))r_{10}(\mathbf{x},\mathbf{x}_n). \label{eq:updateLocNM}
    \end{align}



Next, we consider the derivation of $\{r_{ij}(\mathbf{x},\mathbf{x}_n)\}$ based on radio propagation property and high level environmental semantics. The derivation of $ \Pr(l(\mathbf{x})\mid z_n) $ refers to Appendix \ref{Appendix B}.

\subsection{The spatial correlation of LoS link probability}
The spatial correlation between an arbitrary location $\mathbf{x}$ and the measurement location $\mathbf{x}_n$ is environment-specific, which lacks the general model. However, as shown in Fig.~\ref{F:spaCorr}, from the geometric requirement of the LoS link, {it is clear that if the LoS link at $\mathbf{x}_n$ is blocked, the locations with the same angle beyond $\mathbf{x}_n$ must also have NLoS link. Similarly, if the LoS link at $ \mathbf{x}_n $ exists, the locations with the same angle forward of $ \mathbf{x}_n $ must also have LoS link.} Besides, due to the substantial size of the building, the locations in proximity tend to have the same LoS conditions. These observations are used to characterize the  spatial correlation of LoS probabilities and hence enable the proposed binary Bayesian filter update. 

\begin{figure}[htb]
	\centering
	\includegraphics[scale=1.051]{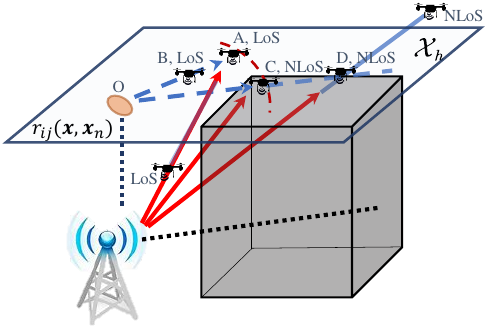}
	\captionsetup{justification=justified}
	\caption{The LSM spatial correlation in different locations.  For example, if point C is NLoS, then point D, which is at the same angle with respect to point O and behind C, must also be NLoS.
		If point A is LoS, then point B, at the same angle with respect to point O and between A and O, must also be LoS. Point O is the projection of GBS antenna on the $ \mathcal{X}_h $.}
	\label{F:spaCorr}
\end{figure}


Denote by $(r_n,\phi_n)$ the polar coordinates of the measurement location $\mathbf{x}_n$,  where $r_n$ is the distance between $\mathbf{x}_n$ with the projection of the GBS on the UAV flying plane and $\phi_n$ is the azimuth angle. We first consider the LSM update on locations with the same azimuth angle $\phi_n$. For a typical location $\mathbf{x}=(r,\phi_n)$, the LoS probability can be updated based on the following rules:
\begin{itemize}
	\item {If the communication link between location $ \mathbf{x}_n$ and the GBS is LoS, then the communication link of any location $ \mathbf{x}$ between location $ \mathbf{x}_n$ and GBS is LoS, i.e. $ r_{11} (\mathbf{x},\mathbf{x}_n ) = 1$, for any $ r < r_n$. On the other hand, for $r>r_n$, the posterior LoS probability can be updated based on binary Bayesian filter, according to \eqref{eq:updateLoc}.}
	\item{ If the communication link between location $ \mathbf{x}_n$ and the GBS is NLoS, 	then the communication link of the location  $ \mathbf{x}$ on the extended line between location  $ \mathbf{x}_n$ and the GBS towards location $ \mathbf{x}_n$ is NLoS, i.e. $ r_{00} (\mathbf{x},\mathbf{x}_n ) = 1$, for any $ r > r_n$. On the other hand, for $r<r_n$,  the posterior LoS probability can be updated based on binary Bayesian filter, according to \eqref{eq:updateLoc}}
\end{itemize}

Based on the above radio-propagation rules, we can derive the spatial correlation of LoS probability on locations $\mathbf{x}=(r,\phi_n)$ and $\mathbf{x}_n=(r_n,\phi_n)$ with the same azimuth angle $ \phi_n $ as 
\begin{equation}\label{eq:CorrelationOnAngle}
	\begin{aligned}
		r_{ij} (\mathbf{x},\mathbf{x}_n )= \begin{cases}
			1, & i = j = 1, r \leq r_n\\
			\frac{\Pr(l(\mathbf{x})=0)}{\Pr(l(\mathbf{x}_n)=0)},  & i = j = 0, r < r_n \\
			\frac{\Pr(l(\mathbf{x})=1)}{\Pr(l(\mathbf{x}_n)=1)}, & i = j = 1, r > r_n \\
			1, & i = j = 0, r \geq r_n\\
		\end{cases}
	\end{aligned}
\end{equation}

Next, we consider the spatial correlation of LoS probability for locations on the same distance, but at different azimuth directions, i.e., for $\mathbf{x}=(r_n,\phi)$, where $\phi\neq\phi_n$. Based on the radio propagation property shown in Fig.~\ref{F:spaCorr}, the location $\mathbf{x}$ tends to have the same link status with $\mathbf{x}_n$ if they are in proximity, i.e., with small azimuth angle difference, denoted by $\Delta\phi=|\phi-\phi_n|$. However, the proper statistic model for the spatial correlation of LoS link is not available yet. To tackle this problem, we first need to introduce a parameter that captures the correlation of $l(\mathbf{x})$ and $l(\mathbf{x}_n)$, and then choose the proper correlation function which matches real communication environment. 

Since $l(\mathbf{x})$ and $l(\mathbf{x}_n)$ are binary random variables, we use the phi coefficient  to quantize the level of correlation between them, which is equivalent to the Pearson correlation coefficient reduced to binary case \cite{PhiCoef}. Specifically, denote  the joint probability of   $l(\mathbf{x})$ and $l(\mathbf{x}_n)$ as 
\begin{align}
    p_{ij}=\Pr(l(\mathbf{x})=i,l(\mathbf{x}_n)=j), i,j\in\{0,1\}.
\end{align}
Then, the phi coefficient is defined as
\begin{equation}
	\begin{aligned}
		\rho = \frac{p_{11}p_{00}-p_{10}p_{01}}{\sqrt{(p_{11}+p_{10})(p_{01}+p_{00})(p_{01}+p_{11})(p_{00}+p_{10})}},
	\end{aligned}\label{eq:phi}
\end{equation}
where $\rho\in [0,1]$ indicates the level of correlation. For example, when $l(\mathbf{x})$ and $l(\mathbf{x}_n)$ are independent, we have  $\rho=0$. When $l(\mathbf{x})$ and $l(\mathbf{x}_n)$ are highly correlated, i.e., with $l(\mathbf{x})=l(\mathbf{x}_n)$, we have $\rho=1$. 

Since the obstacles are usually of limited size, the correlation between  $l(\mathbf{x})$ and $l(\mathbf{x}_n)$ should decay quickly with the azimuth angle difference $\Delta\phi$ and it becomes 0 when $\Delta\phi=\pi$. To exploit such environmental semantics, we introduce an exponential LoS correlation model as
\begin{equation}
	\rho(\Delta\phi)  = 1 - \exp(\beta(1-\frac{\pi}{\Delta \phi})),\  \Delta\phi\in [0,\pi], \label{eq:corModel}
\end{equation}
where $\beta$ is an adjustable parameter. If the blockages in the propagation environment are small in size and relatively far away from the GBS, we can choose a small value of $\beta$ to indicate the quickly de-correlated LoS condition. On the other hand, the blockages are close to the GBS or large in size, $\beta$ can be chosen to be large. Without environment information, we suggest to choose $\beta=1$. Fig.~\ref{F:rho-beta} shows the spatial correlation curves with different $ \beta $.
\begin{figure}
	    \centering
	\includegraphics[scale=0.645]{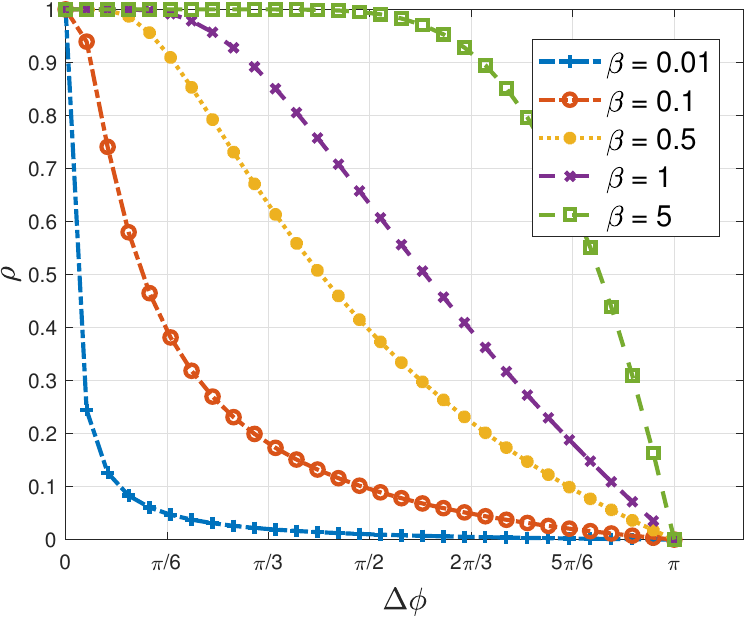}
	\caption{The spatial correlation curve $ \rho - \Delta \phi $ corresponding to  different $ \beta$.}
	\label{F:rho-beta}
\end{figure}

With the established correlation model in \eqref{eq:corModel}, we can calculate the phi coefficient between the location $\mathbf{x}=(r_n,\phi)$ and the measurement location $\mathbf{x}_n=(r_n, \phi_n)$ as $\rho(\Delta\phi)$. Then, based on the definition of phi coefficient in \eqref{eq:phi}, we can calculate the spatial correlation of LoS probability on locations $ \mathbf{x} $ and $ \mathbf{x}_n $ with the same distance $ r_n $ as
	\begin{align}\label{eq:Correlationondistance}
		&r_{ij}(\mathbf{x},\mathbf{x}_n )= \Pr(l(\mathbf{x})=i|l(\mathbf{x}_n)=j) \nonumber\\
  &= \begin{cases}
			\Pr(l(\mathbf{x})=i) +  \frac{\rho}{T} \Pr(l(\mathbf{x}_n=1-i)),  i = j,  \phi \neq \phi_n \\
			\Pr(l(\mathbf{x})=i) -  \frac{\rho}{T} \Pr(l(\mathbf{x}_n=i)),   i \neq j, \phi \neq \phi_n \\
		\end{cases}
	\end{align}
where 
\begin{equation}\label{eq:definitionT}
	\begin{aligned}
		T \triangleq \sqrt{\frac{\Pr(l(\mathbf{x}_n)=1)\Pr(l(\mathbf{x}_n)=0) }{\Pr(l(\mathbf{x})=1)\Pr(l(\mathbf{x})=0) }}.
	\end{aligned}
\end{equation}
Under the setting of LoS prior probability in \eqref{eq:LoSModel}, we have $ T = 1$ due to the equation $ \mathcal{M}_0(\mathbf{x}) = \mathcal{M}_0(\mathbf{x}_n) $, which holds for the LoS prior probability at the same distance $ r = r_n $.
Note that if there is no prior information on these two LoS random variables, it reduces to $ r_{11} = r_{00} = \frac{1+\rho}{2} $ and $ r_{10} = r_{01} = \frac{1-\rho}{2} $.
The derivation details of $ r_{ij} $, i.e. equation \eqref{eq:CorrelationOnAngle}, \eqref{eq:Correlationondistance}, are shown in the Appendix \ref{Appendix C}.

\vspace{-1ex}
\subsection{LSM Construction Algorithm}
With the spatial correlation models derived in the preceding subsection, we can enhance the prior LSM based on the collected radio measurements $\mathbf{X}$. For notational convenience, we consider the polar grid representation of LSM as shown in Fig.~\ref{F:polarGrid}. The LSM construction is completed in two steps:
\begin{itemize}
\item{\emph{Step 1}: For each measurement collected, the LoS probability on the same azimuth direction is updated based on spatial correlation model in \eqref{eq:CorrelationOnAngle}. }
\item {\emph{Step 2}: The directions without radio measurement are updated by the adjacent directions, based on the spatial correlation model in \eqref{eq:Correlationondistance}}. 
\end{itemize}
\begin{figure}[htb]
    \centering
    \includegraphics[scale=1.1]{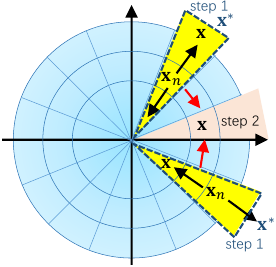}
    \caption{The LSM represented in Polar Grid.}
    \label{F:polarGrid}
\end{figure}

For step 1, considering a typical measurement $z_n$ at $\mathbf{x}_n=(r_n,\phi_n)\in\mathbf{X}$, we first compute the posterior LoS probability $ \Pr(l(\mathbf{x}_n)|z_n) $  based on \eqref{eq:updateLoc}, and  update logarithmic probability ratio $\mathcal{L}_n(\mathbf{x}_n)$ based on Lemma~\ref{lem:BBF}. Next, for the locations with the same azimuth angle, i.e., $\mathbf{x}=(r,\phi_n)$, the posterior LoS probability $\Pr(l(\mathbf{x})=1|z_n)$  can be computed by substituting the correlation function in \eqref{eq:CorrelationOnAngle} into \eqref{eq:updateLocNM}. Then, the  logarithmic probability ratio at $\mathbf{x}$ can be updated based on Lemma~\ref{lem:BBF}. Specifically, the posterior probability ratio used in \eqref{eq:recursiveLn} for LSM update is given by
\begin{align}\label{eq:invmeasmodel1}
k(\mathbf{x},z_n)=\begin{cases}
    \frac{k(\mathbf{x}_n,z_n)(1-\mathcal{M}_0(\mathbf{x}_n))-\mathcal{M}_0(\mathbf{x}_n)+\mathcal{M}_0(\mathbf{x})}{1-\mathcal{M}_0(\mathbf{x})} & r<r_n\\  
    \frac{k(\mathbf{x}_n,z_n)\mathcal{M}_0(\mathbf{x})}{k(\mathbf{x}_n,z_n) ( \mathcal{M}_0(\mathbf{x}_n)-\mathcal{M}_0(\mathbf{x}) ) + \mathcal{M}_0(\mathbf{x}_n)}, & r\geq r_n
\end{cases}.
\end{align}

Step 2 starts after all the radio measurements have been considered. Denote the directions with radio measurements as $\Phi=\{\phi_n,n=1,...,N\}$, which have been updated in step 1. Then, the directions without radio measurements are given by the set $\bar\Phi=[0,2\pi)\setminus\Phi$. The LoS probability on those directions is updated based on the spatial correlation model in \eqref{eq:Correlationondistance}. To reduce the complexity, we consider an angle threshold for LoS link status, denoted by $\phi_{\mathrm{th}}$, beyond which the LoS correlation is assumed to be negligible. Consider a typical direction $\phi\in\bar{\Phi}$, its angle correlation region is defined as $\Gamma(\phi)=(\phi-\phi_{\mathrm{th}}, \phi+\phi_{\mathrm{th}})$. If there is no measurement within the correlation region, i.e., $\Gamma(\phi)\bigcap \Phi=\emptyset$, the direction $\phi$ is not updated. If there are multiple directions in $\Phi$ within the correlation region, the LSM is updated based on the nearest direction. Denote by $\phi^*=\arg\min_{\phi'\in\Phi}|\phi-\phi'|$ the nearest direction with measurement. Then, the posterior LoS probability at $\mathbf{x}=(r,\phi)$ is updated based on the posterior LoS probability at $\mathbf{x}^*=(r,\phi^*)$ using Bayesian principle.  Mathematically, we have
\begin{align}
     \Pr(l(\mathbf{x})=1|\mathbf{Z})&= \mathcal{M}_N(\mathbf{x}^*)r_{11}(\mathbf{x},\mathbf{x}^*) \nonumber \\
     &+(1-\mathcal{M}_N(\mathbf{x}^*))r_{10}(\mathbf{x},\mathbf{x}^*),
\end{align}
where $\mathcal{M}_N(\mathbf{x}^*)$ is the posterior LoS probability at $\mathbf{x}^*$ after all the measurements have been processed, $r_{ij}(\mathbf{x},\mathbf{x}^*)$ is the correlation between these two directions given by \eqref{eq:Correlationondistance}, and $\Pr(l(\mathbf{x})=1|\mathbf{Z})$ is the posterior LoS probability at $\mathbf{x}$ after considering the measurements on the nearby directions.
Specifically, the posterior probability for LSM update at the directions without radio measurement is given by
\begin{equation}
	\begin{aligned}\label{eq:updateLocNMA}
		\Pr(l(\mathbf{x})|\mathbf{Z}) = \mathcal{M}_0(\mathbf{x}) + \frac{\rho}{T}(\mathcal{M}_N(\mathbf{x}^*)-\mathcal{M}_0(\mathbf{x}^*)), 
	\end{aligned}
\end{equation}
where $ T $ is defined in \eqref{eq:definitionT}. Note that \eqref{eq:updateLocNMA} can be further reduced under the following special case:

\begin{itemize}
	\item When the location at $ \mathbf{x}$ and $ \mathbf{x}^* $ are highly correlated, i.e., $ \rho=1 $, we have $ \Pr(l(\mathbf{x})|\mathbf{Z}) = \mathcal{M}_N(\mathbf{x}^*) $, which implies that the LoS probability at $ \mathbf{x} $ is replaced with the more accurate measurement results at $\mathbf{x}^*$.    
	\item When the location at $ \mathbf{x} $ and $ \mathbf{x}^*$ are totally uncorrelated, i.e., $ \rho=0 $, we have $ \Pr(l(\mathbf{x})|\mathbf{Z}) = \mathcal{M}_0(\mathbf{x}) $, which implies that the LoS probability at $ \mathbf{x} $ is not updated. 
\end{itemize}
When we use the LoS prior probability model in \eqref{eq:LoSModel}, $ T = 1 $. 
the posterior LoS probability is then denoted as the update of LSM distribution, i.e. $ \mathcal{M}_N(\mathbf{x}) \triangleq \Pr(l(\mathbf{x})|\mathbf{Z}) $. The design of LSM construction function $ f(.)$ can be summarized as Algorithm \ref{alg2}.
 To construct the LSM, the first step is to perform a sequential Bayesian update using the spatial correlation of distance in \eqref{eq:CorrelationOnAngle}, which requires an exhaustive search of the measurements with a complexity overhead of $ \mathcal{O}(N) $. And then, the directions without radio measurement are updated by the adjacent directions based on the spatial correlation of angle in \eqref{eq:Correlationondistance}, the worst complexity of this step is $  \mathcal{O}(| \mathcal{X}_h| N) $. Therefore, the algorithm complexity grows linearly with the number of measurements $ N $ and the total number of grids $ | \mathcal{X}_h| $ split by $ \mathcal{X}_h $, respectively.

	\begin{algorithm}[t]
		\caption{The design of probabilistic LSM construction function }
		\label{alg2}
		\begin{algorithmic}[1]
			\STATE{\textbf{Input:} Prior LoS probability $ \mathcal{M}_0(\mathbf{x}) \triangleq  \Pr(l(\mathbf{x})=1)$, $ \mathbf{X} $, $ \mathbf{Z} $, The projection of BS $ \mathbf{x}_{bs}$.}
			\STATE{ Denote $ (0,0,H)$, $ (x_n,y_n,H)$ as the projection of GBS and radio measurement location at $ n$-th time slot, respectively. Denote any location $ \mathbf{x} = (x,y,H) \in \mathcal{X}_h$.}
			\STATE{\textbf{For} $ n = 1:N$}
			\STATE{ $r_n=\sqrt{(x_n)^2+(y_n)^2},\theta_n=\arctan2(y_n,x_n) $}
			\STATE{ $r = \sqrt{(x)^2+(y)^2},\theta=\arctan2(y,x)  $}
			\STATE{For any location $ \mathbf{x} $ in the direction with measurement  } 
			\STATE{\quad \quad Compute $ \Pr(l(\textbf{x})|z_n) $ by $ \eqref{eq:updateLoc}-\eqref{eq:CorrelationOnAngle} $ }
			\STATE{ \quad \quad Updating $ \mathcal{L}_n(\mathbf{x}) $  by binary Bayesian filter $ \mathcal{L}_n(\textbf{x}) = $  \quad  $ \mathcal{L}_{n-1}(\textbf{x}) + \ln \frac{\Pr(l(\textbf{x})|z_n)}{1-\Pr(l(\textbf{x})|z_n)}-\mathcal{L}_0(\textbf{x})$}
			\STATE{\textbf{end}}
			\STATE{Compute $ \mathcal{M}_N(\mathbf{x}) = 1/(1+e^{-\mathcal{L}_N(\mathbf{x})}) $ for the updated location}
			\STATE{For any location $ \mathbf{x} $ on the direction without measurement}
			\STATE{\quad \quad Find the nearest direction with measurement $\phi^*=\arg\min_{\phi'\in\Phi}|\phi-\phi'|$, for all direction $ \phi \in \Phi $ of measurement $\mathbf{Z}$}
			\STATE{\quad \quad \textbf{If} $ \phi^* = \emptyset $}
			\STATE{\quad \quad \quad \quad \textbf{Break}}
			\STATE{\quad \quad \textbf{Else}}
				\STATE{\quad \quad \quad \quad Denote by $ \mathbf{x}^* = (r,\phi^*) $ the location with the smallest angle distance with $ \mathbf{x} $}
			\STATE{\quad \quad \quad \quad Compute the posterior probability $ \mathcal{M}_N(\mathbf{x}) = \Pr(l(\mathbf{x})|\mathbf{Z})$ by \eqref{eq:updateLocNMA}}
			\STATE{\quad \quad \textbf{end}}
			


			\STATE{\textbf{Output}: The constructed LSM $ \mathcal{M}$ and its distribution $ \mathcal{M}_N $ }
		\end{algorithmic}
	\end{algorithm}
\vspace{-1.5ex}
\section{Numeric Results}

\begin{figure}[h]
	\centering
	
	\begin{subfigure}[b]{0.24\textwidth}
		\centering 
		\includegraphics[scale=0.334]{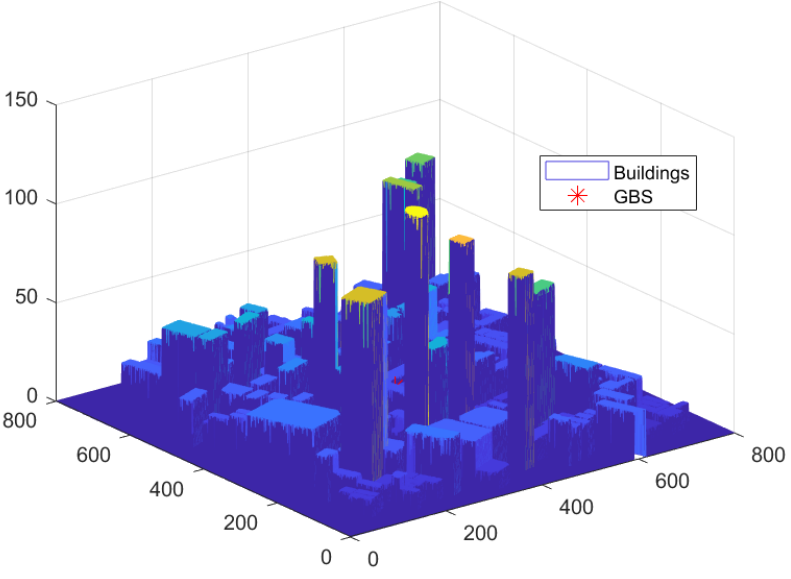}
		\caption{Simulated 3D urban area}
		\label{F: Buildings}
	\end{subfigure}
		\begin{subfigure}[b]{0.24\textwidth}
		\centering 
		\includegraphics[width=0.96\columnwidth]{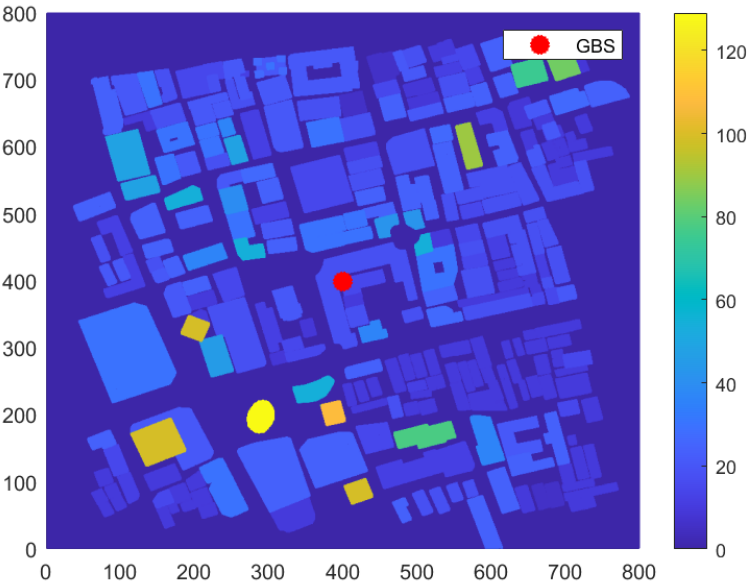}
		\caption{2D sampled location}
		\label{F:TopviewBuilding}
	\end{subfigure}
	\hfill
	\caption{The simulated urban area.}
	\label{F:BuildingsandTopviewBuilding}
\end{figure}

\vspace{-1ex}
In this section, we evaluate the proposed algorithms in an urban area of $ 800 $m $ \times 800 $m, as shown in Fig. \ref{F:BuildingsandTopviewBuilding}, 
 where a GBS communicates with a UAV. We consider the probabilistic LSM construction at the fixed plane $ \mathcal{X}_h$, the height of UAV is set as $ H = 129$m. 
The dedicated channel model parameters are set by 3GPP TR36.777 UMa-AV model \cite{3GPP}. The channel gain in dB can be written as follows. 
\begin{equation}
		\begin{aligned}
		&g(\mathbf{x}) =\\
		 &\begin{cases}
			&-28 - 22 \log_{10} (\| \mathbf{x} \|) - 20 \log_{10} (f_c) \\ &
			 + \eta_{1},  \ 
			\rm{if} \  LoS, \\
			&17.5 - 20 \log_{10} (\frac{40 \pi f_c}{3}) - (46 - 7\log_{10}(H))\log_{10}(\| \mathbf{x} \|) \\
			 & + \eta_{0}, \ \rm{if} \ NLoS,
		\end{cases}
	\end{aligned}
\end{equation}
where $ \alpha_1 = - 2.2 $, $ \alpha_{0} = - 4.6 + 0.7\log_{10}(H) $, $ \beta_{1} = -28 - 20 \log_{10}(f_c)  $, $ \beta_{0} = 17.5 - 20 \log_{10} (\frac{40 \pi f_c}{3})  $, $ \eta_{1} \sim \mathcal{N}(0,\sigma_1^2) $, $ \eta_{0} \sim \mathcal{N}(0,\sigma_0^2) $.
The prior probability of LSM $ \Pr(l(\mathbf{x})),\mathbf{x} \in \mathcal{X}_h$ is set by \eqref{eq:LoSModel} \cite{holis2008elevation}. The prior LSM and ground-truth LSM are shown in Fig.~\ref{F:realLSM and priorLSM}. 
Specifically, the empirical parameter settings are: $ a = 120, b = c = 0, d = 24.3, e = 1.229 $.
 The grid size of simulated map $ \mathcal{X}_h$  is set as $ 1 $m$\times 1 $m . The GBS is placed at $ (400,400) $m. The main simulation parameters settings about the simulated environment are shown in Table \ref{tab:Para1}.

\begin{figure}[h]
	\centering    
	\begin{subfigure}[b]{0.24\textwidth}
		\centering 
		\includegraphics[scale=0.35]{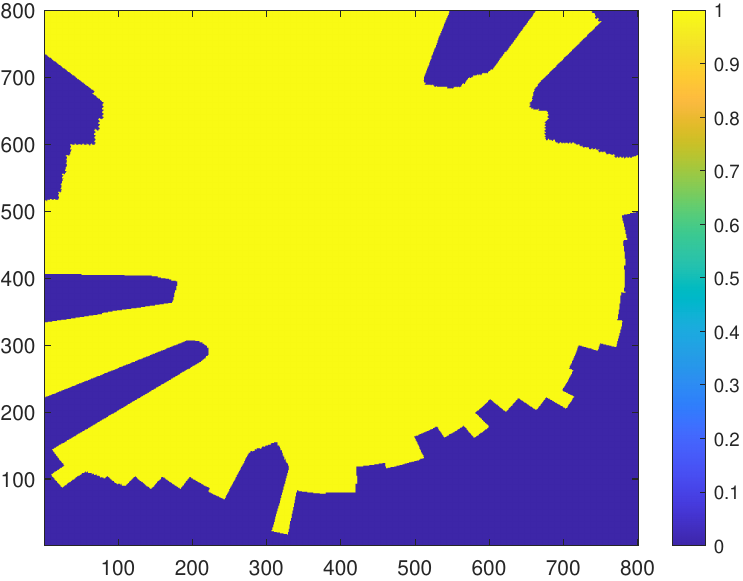}
		\caption{Ground-Truth LSM}
		\label{F:real LSM}
	\end{subfigure}
	\hfill
	\begin{subfigure}[b]{0.24\textwidth}
		\centering 
		\includegraphics[scale=0.35]{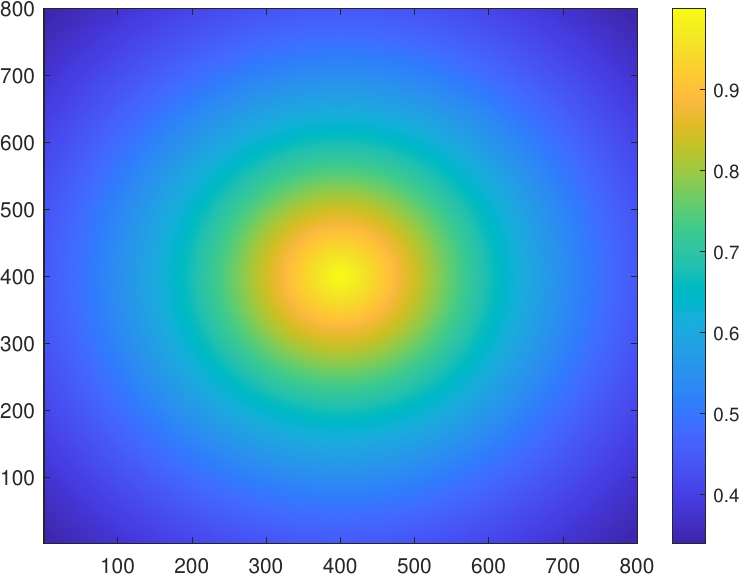}
		\caption{Prior LSM}
		\label{F:prior LSM}
	\end{subfigure}
	\hfill
	\caption{The Ground-Truth LSM and Prior LSM by our simulation environment and parameter setting.}
	\label{F:realLSM and priorLSM}
\end{figure}

\begin{table}[h] 
	
	\caption{Main Simulation Parameters}
	\centering
	\begin{tabular}{l|c}
		\hline
		\textbf{Parameter \& notations} & \textbf{Value}  \\
		\hline
		BS antenna height $ H_{bs}$ & $ 15.00 $m \\
		\hline
		Communication frequency $ f_c $ & $ 28$GHz \\
		\hline
		UAV flight height $H$ & $129$m\\
		\hline
		The grids size $\delta_x = \delta_y $ & $1m $ \\
		\hline
		The path loss exponent $\alpha_{1}$ in LoS & $-2.20$\\
		\hline
		The path loss offset $\beta_{1}$ in LoS & $-56.9431$\\
		\hline
		The variance of shadowing effects $ \eta_{1}$ in LoS & $3.9221$\\
		\hline
		The path loss exponent $\alpha_{0}$ in NLoS & $-3.12$\\
		\hline
		The path loss offset $\beta_{0}$ in NLoS & $-43.8849$\\
		\hline
		The variance of shadowing effects $ \eta_{0}$ in  NLoS & $ 6.25 $ \\
		\hline
		The variance of radio measurements $ \delta_{t}$ & $0$\\
		
		
		\hline
		
	\end{tabular}
	
	\label{tab:Para1}
\end{table}

To evaluate the performance of our algorithm, we set the following benchmark schemes:
\begin{itemize}
	\item \textit{Prior LSM}: directly given by the prior distribution in \eqref{eq:LoSModel} without any measurement.
	\item \textit{The baseline LSM by $ K $-nearest neighbours}:  directly interpolated by the LoS likelihood log ratio at the measured location. 
	$ K $-nearest neighbours (KNN) can interpolated each predicted location point based on the LoS likelihood logarithmic ratio of the $ K $ nearest neighbour measurements that have the smallest Euclidean distance from the location of the measurements.

	\begin{equation}\label{eq:KNNinterpolation1}
		\mathcal{M}_{KNN}(\mathbf{x}) = \frac{\sum_{\mathbf{x}_n \in \mathcal{N}_1(\mathbf{x})} w_n(\mathbf{x}) \Pr(l(\mathbf{x}_n)|z_n)}{\sum_{n \in \mathcal{N}_1(x)} w_n(\mathbf{x})}
	\end{equation}
	where $ \mathcal{N}_1(\mathbf{x}) $ defined in \eqref{eq:KNNinterpolation1} is the location  set of the $ K $-nearest
	neighbor measurement samples of $ \mathbf{x} $, $ \mathcal{N}_1(\mathbf{x}) \subset X $. 
	 
	
	\item \textit{The proposed LSM only with the spatial correlation of distance}: after updating the LSM on the direction with measurement, other locations without updating are interpolated by KNN using the posterior LoS likelihood log ratio $ \mathcal{L}_N(\mathbf{x}) $ of the direction with the measurements.
	Thus, for the location that have not been updated by $ \mathbf{Z}$, we use spatial interpolation, such as KNN, to update the probabilistic LSM. 
	We consider the direction of LSM that has been updated by $ \mathbf{Z}$ and resample the posterior log-probability ratio of the LSM in each direction with a smaller spacing of $ \delta_d $ than the measurements space interval. Denote $ \mathcal{S} $ as the resample location set.
	For any location $ \mathbf{x} $ without updating, KNN interpolation can be written as 
	\begin{equation}\label{eq:KNNinterpolation2}
		\mathcal{L}_N(\mathbf{x}) = \frac{\sum_{x_i \in \mathcal{N}_2(\mathbf{x}) } w_i(\mathbf{x}) \mathcal{L}_N(\mathbf{x}_i)}{\sum_{x_i \in \mathcal{N}_2(\mathbf{x})} w_i(\mathbf{x})},
	\end{equation}
	where $ \mathcal{N}_2(\mathbf{x}) $ defined in \eqref{eq:KNNinterpolation2} is the location  set of the $ K $-nearest
	neighbor resample samples of $ \mathbf{x} $ in resample location set $ \mathcal{S} $, $ \mathcal{N}_2(\mathbf{x}) \subset S $. 
	Then, the LSM $ \mathcal{M}_N(\mathbf{x}) $ can be calculated by $ \mathcal{M}_N(\mathbf{x}) = \frac{1}{1+\exp(-\mathcal{L}_N(\mathbf{x}))}$. 
	In our simulation, $ K $ is set to $ 5 $.
	
	
\end{itemize}

As shown in Fig~\ref{F:MAE-number}, we first evaluate the relationship between the MAE of LSM and the number of measurements, where the radio measurements are collected from the random location sample points on each discrete direction.  
The angle interval of the sample is $ \delta_{\phi} = \frac{\pi}{36} $, so there are $ M = 72$ discrete directions. We randomly sample the same number of measurements on each discrete direction.
It can be observed that the constructed map quality gets better as the number of measurements increases.

\begin{figure}[t]
	\centering
	\includegraphics[scale=0.6]{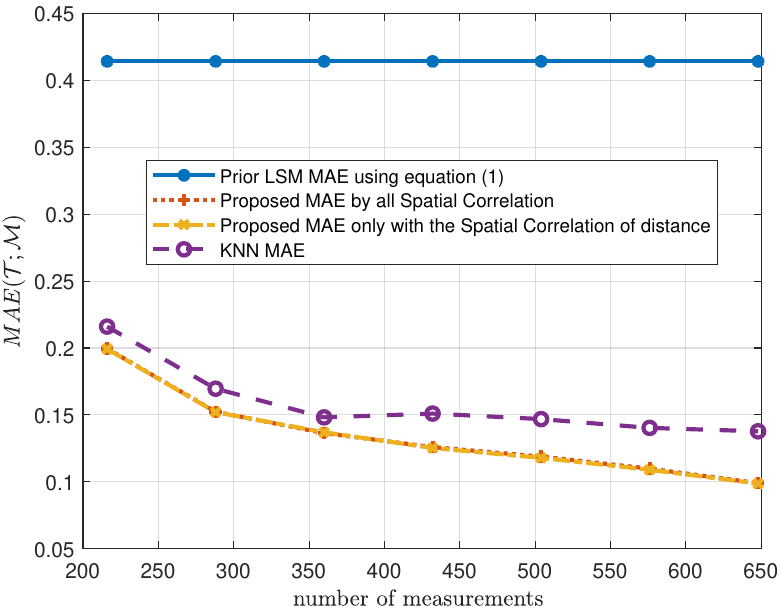}
	\caption{The performance of LSM construction algorithm $ MAE(\mathcal{T};\mathcal{M}) $ - The number of measurement $ N $.
	}
	\label{F:MAE-number}
\end{figure}

 In Fig~\ref{F:lnk-sigma0}, we evaluate the relationship between measurement quality $ \ln k(\mathbf{x}_n,z_n) $ and shadowing variance $ \sigma_0^2 $ on different measurement locations. 
We conduct extensive numerical simulations at each measurement location with different distances to GBS, with half of the simulations set to LoS conditions and the other half to NLoS conditions.  Denote $ r $ as the distance to the GBS at the measurement location.
	

\begin{figure}[t]
	\centering
    \includegraphics[scale=0.6]{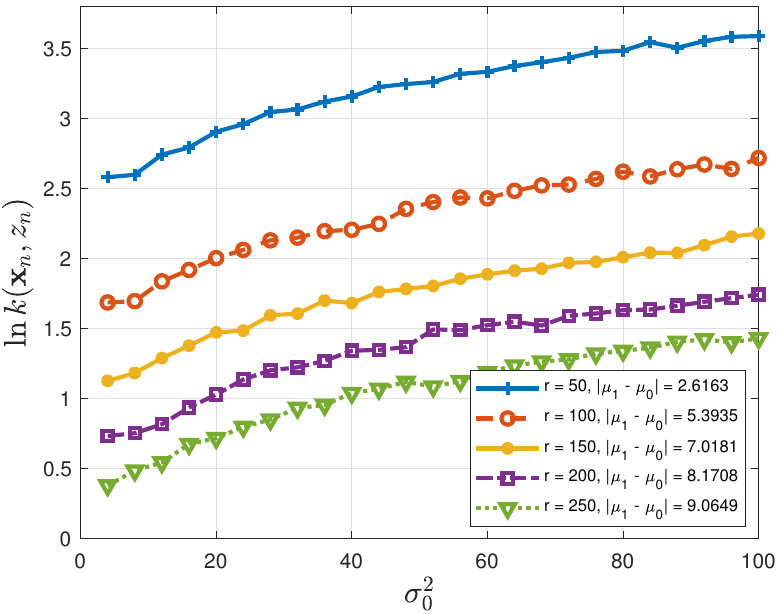}
	\caption{The measurement coefficient quality $ k(\mathbf{x}_n,z_n) $ - The shadowing variance of NLoS $ \sigma_0^2 $.
	}
	\label{F:lnk-sigma0}
\end{figure}

Fig.~\ref{F:MAE-sigma0} presents the line chart of map construction quality MAE  versus $ \sigma_0^2 $, comparing the proposed algorithm with the benchmark. Parameter settings other than $ \sigma_0^2 $ remain unchanged, as shown in Table~\ref{tab:Para1}. From the line plot of MAE in the Fig.~\ref{F:MAE-sigma0}, we can see that as the measurements quality improves, i.e. the shadowing variance $ \sigma_0^2 $ of NLoS gets smaller, LSM construction quality improves.

\begin{figure}[t]
	\centering
	\includegraphics[scale=0.6]{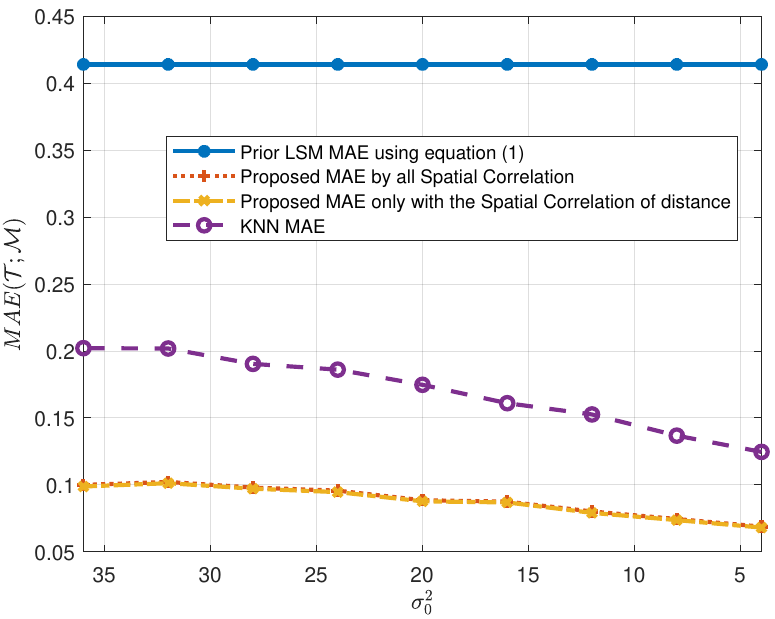}
	\caption{The performance of LSM construction algorithm $ MAE(\mathcal{T};\mathcal{M}) $ - The shadowing variance of NLoS $ \sigma_0^2 $, \\
		parameter setting: $ \beta = 1, \phi_{th} = \frac{\pi}{9}, \delta_{\phi} = \frac{\pi}{36}, \delta_{D} = 50 $m$, \delta_{d} = 1 $ m.  }
	\label{F:MAE-sigma0}
\end{figure}

\begin{figure*}[t]
	\centering    
	\begin{subfigure}[b]{0.325\textwidth}
		\centering 
		\includegraphics[scale=0.45]{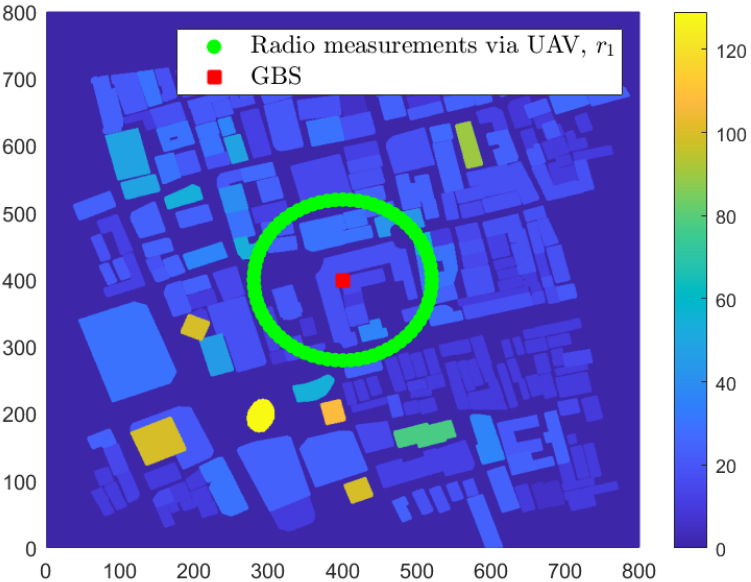}
		\caption{The sampled location by UAV from top view, $ r_1 $.}
		\label{F:sampled location,$ r_1 $}
	\end{subfigure}
	\hfill
	\begin{subfigure}[b]{0.325\textwidth}
		\centering 
		\includegraphics[scale=0.45]{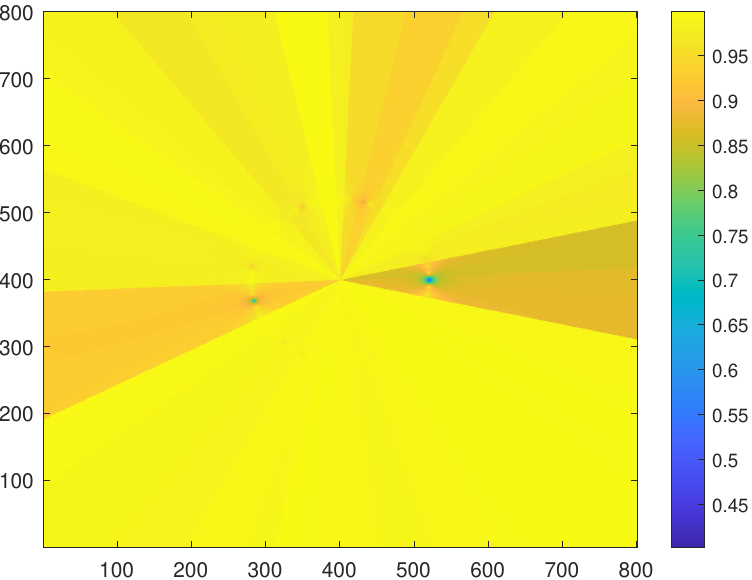}
		\caption{The baseline LSM by $ K $-nearest neighbours, $ r_1 $.}
		\label{F:KNNbaseline LSM,$ r_1 $}
	\end{subfigure}
	\hfill
	\begin{subfigure}[b]{0.325\textwidth}
		\centering 
		\includegraphics[scale=0.45]{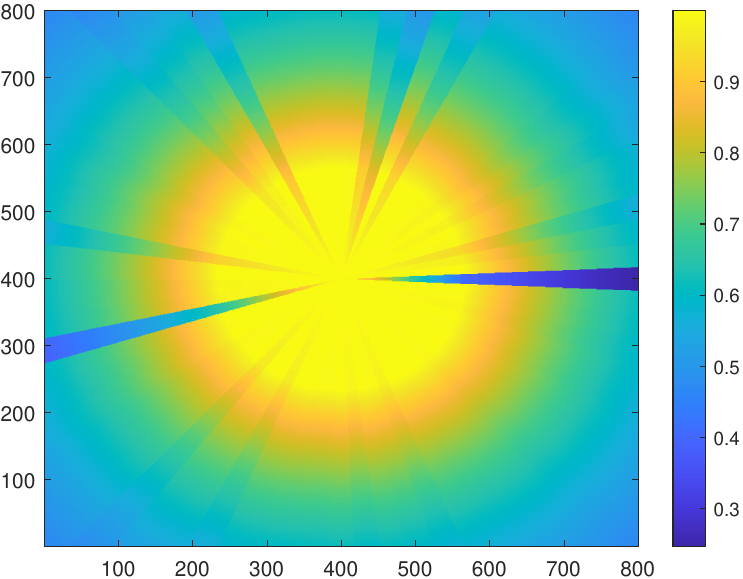}
		\caption{The Proposed LSM by all spatial correlation, $ r_1 $.}
		\label{F:The Proposed LSM by the Spatial Correlation interpolation,$ r_1 $}
	\end{subfigure}
	\hfill
	\begin{subfigure}[b]{0.325\textwidth}
		\centering 
		\includegraphics[scale=0.45]{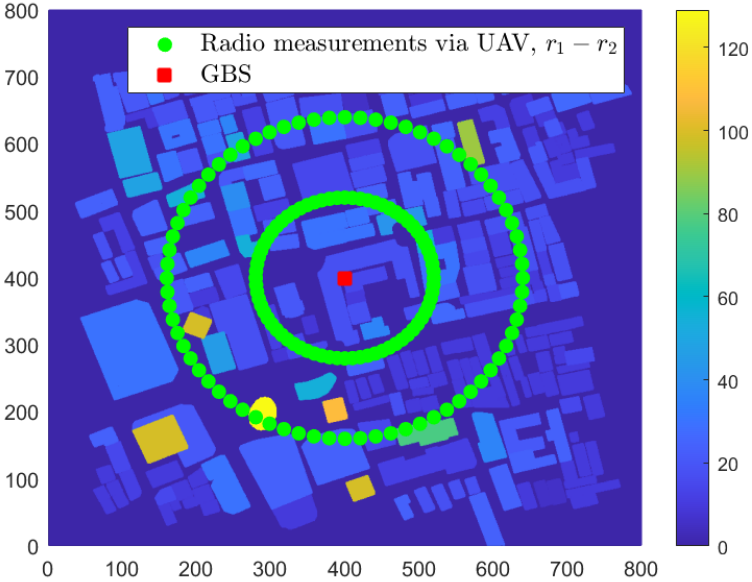}
		\caption{The sampled location by UAV from top view, $ r_1 - r_2 $.}
		\label{F:sampled location, $ r_1 - r_2 $}
	\end{subfigure}
	\hfill
	\begin{subfigure}[b]{0.325\textwidth}
		\centering 
		\includegraphics[scale=0.45]{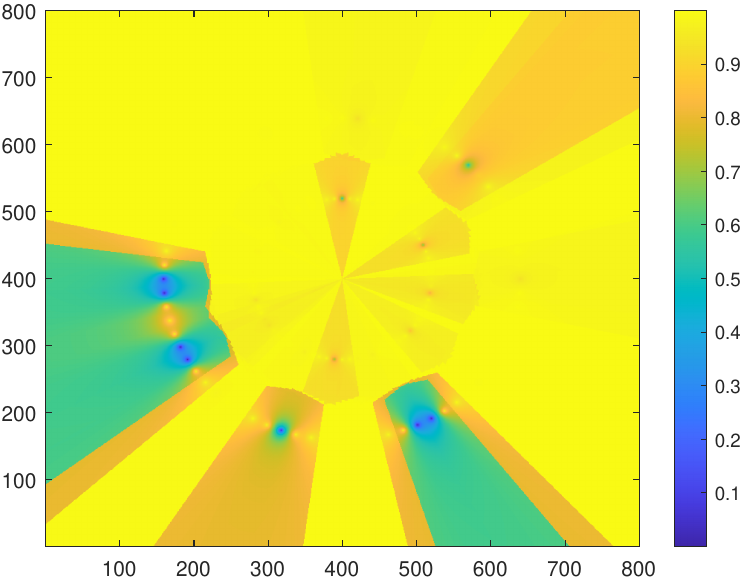}
		\caption{The baseline LSM by $ K $-nearest neighbours, $ r_1 - r_2 $.}
		\label{F:KNNbaseline LSM, $ r_1 - r_2 $}
	\end{subfigure}
	\hfill
	\begin{subfigure}[b]{0.325\textwidth}
		\centering 
		\includegraphics[scale=0.45]{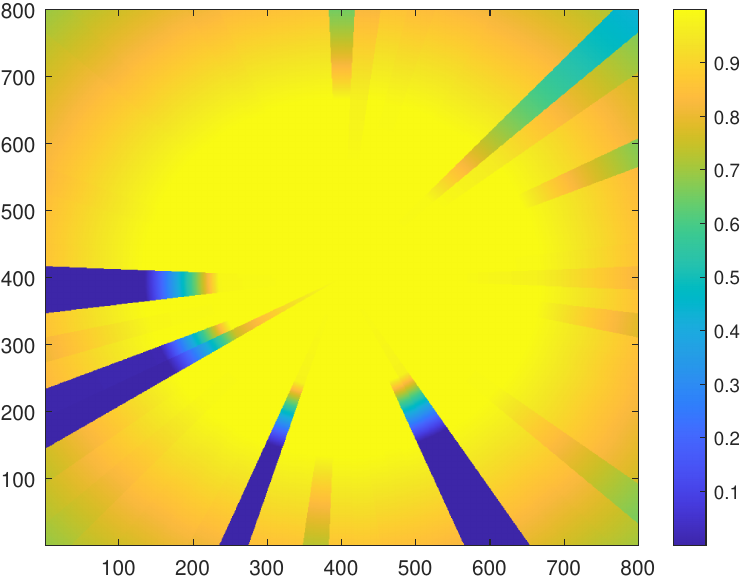}
		\caption{The Proposed LSM by all spatial correlation, $ r_1 - r_2 $.}
		\label{F:The Proposed LSM by the Spatial Correlation interpolation, $ r_1 - r_2 $}
	\end{subfigure}
	\hfill
	\begin{subfigure}[b]{0.325\textwidth}
		\centering 
		\includegraphics[scale=0.45]{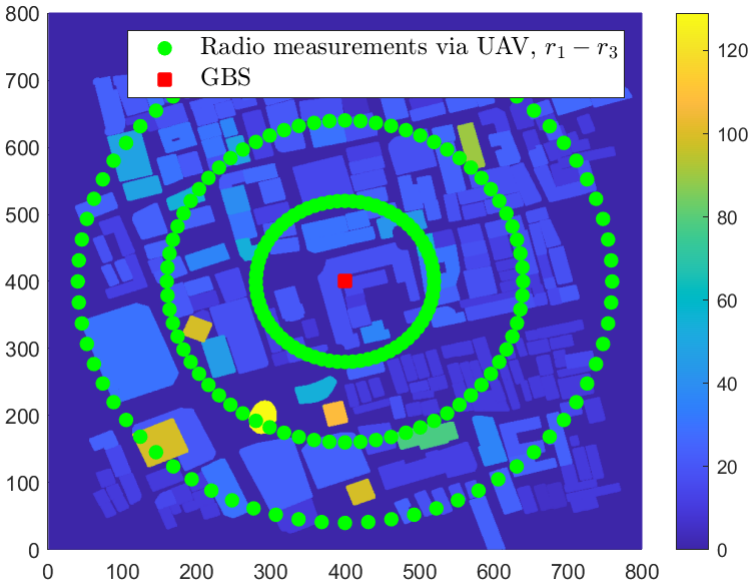}
		\caption{The sampled location by UAV from top view, $ r_1 - r_3 $.}
		\label{F:sampled location, $ r_1 - r_3 $}
	\end{subfigure}
	\hfill
	\begin{subfigure}[b]{0.325\textwidth}
		\centering 
		\includegraphics[scale=0.45]{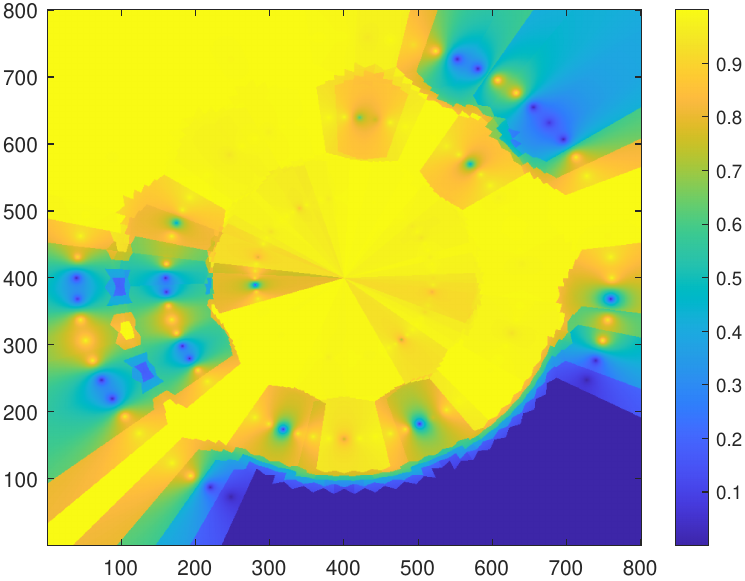}
		\caption{The baseline LSM by $ K $-nearest neighbours, $ r_1 - r_3 $.}
		\label{F:KNNbaseline LSM, $ r_1 - r_3 $}
	\end{subfigure}
	\hfill
	\begin{subfigure}[b]{0.325\textwidth}
		\centering 
		\includegraphics[scale=0.45]{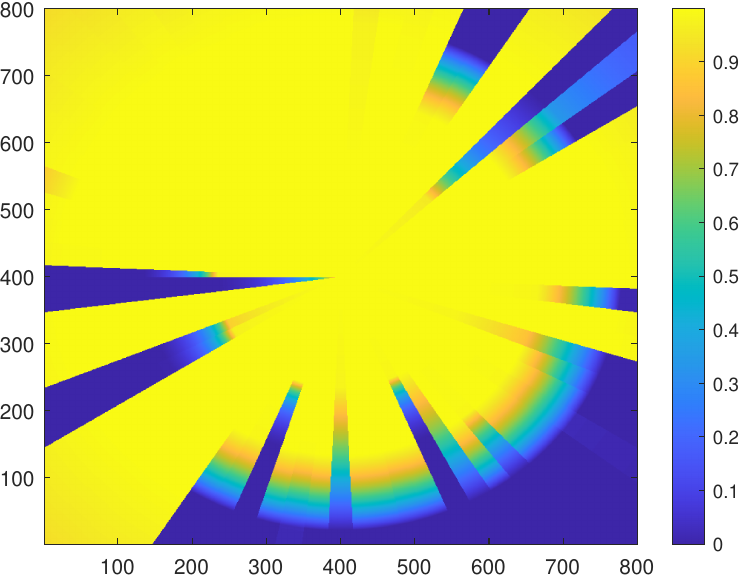}
		\caption{The Proposed LSM by all spatial correlation, $ r_1 - r_3 $.}
		\label{F:The Proposed LSM by the Spatial Correlation interpolation, $ r_1 - r_3 $}
	\end{subfigure}
	\hfill

	\caption{LSM and Probabilistic LSM constructed by our algorithm, $ \delta_D = 120$m, $ \delta_{\phi} = \frac{\pi}{36} $, $ \delta_d = 1 $ m, $ \phi_{th} = \frac{\pi}{9} $.  } 
	\label{F:LSM and EstLSM}
\end{figure*}

As shown in Fig.~\ref{F:LSM and EstLSM},  we visualize the reconstructed LSM by the proposed binary Bayesian filter algorithm exploiting the spatial correlation when we use different numbers of measurements.
We consider the simple circle trajectory to sample the radio measurement by UAV from GBS. Assume that the channel parameters within the area of interest are constant, so that all the channel gain measurements within the region can be used for LSM construction.
As the UAV flies multiple circular trajectories to perform channel gain measurements, we continuously update the LSM. We sample channel measurements at angular intervals of $ \delta_{\phi} = \frac{\pi}{36} $ at a time along the trajectory of a circle.
The radius distance $ \delta_{D} $ between two adjacent circular trajectory flights is equal, $ \delta_{D} $ is set to $ 120 $m. 
Other parameters settings are as follows: $ \phi_{th} = \frac{\pi}{9} $, $ \beta = 1 $, $ \delta_d = 1 $m.
Fig.~\ref{F:LSM and EstLSM} shows the  LSM constructed by two benchmark schemes and the proposed method under the different measurement location settings.
The simulation results are consistent with our expectation, as shown in Fig.~\ref{F:LSM and EstLSM}.  It can be seen that the proposed algorithm has best performance compared with the baseline. It is because we exploit the spatial correlation in both direction and distance to enhance the LSM construction. Besides, when only exploiting the spatial correlation in distance,  the LSM on the direction with the channel measurement is updated, which improves the performance of KNN interpolation because it extends the size of the KNN interpolation training set. 





Then, to further evaluate the performance of the proposed LSM construction
algorithm, the performance comparisons about the sampled interval of distance and angle are discussed in this part.
We evaluate the system performance measured by the MAE defined as \eqref{eq:MAE},
where the summation there is evaluated over all locations $ \mathbf{x} $ within an $ 800 $m $\times$ $ 800 $m
uniform rectangular area.
We examine the map construction accuracy of the proposed algorithm for different urban areas. 
We run Monte Carlo experiments to illustrate the effectiveness of our proposed algorithm. The number of Monte Carlo simulations is $ M_c = 5 $ in each building map, and the number of building maps is $ N_h = 5 $. We sample channel measurements of multiple circular flight trajectories around the projection of GBS in $ \mathcal{X}_h $.
The radius of the circle with the smallest flight radius among them is denoted as $ \delta_{D} $, and their radius increases with equal distance values $ \delta_{D} $ until the UAV trajectory reaches the edge of $ \mathcal{X}_h $ and the UAV obtains the channel measurements with the same angular interval $ \delta_{\phi} $ on each circular flight trajectory.

\begin{figure}[htbp]
	\centering
		\includegraphics[scale=0.6]{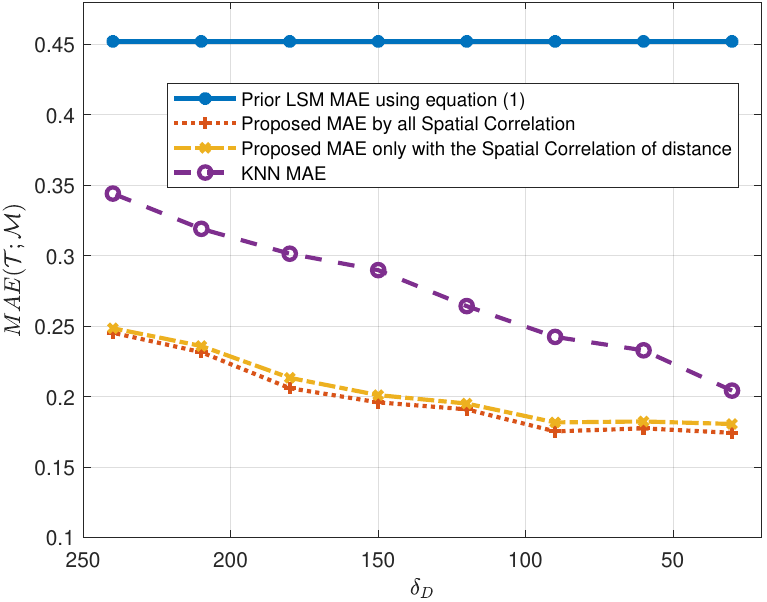}
	\caption{The performance of LSM construction algorithm $ MAE(\mathcal{T};\mathcal{M}) $ - The interval of  sample distance $ \delta_D $, \\
		parameter setting: $ \beta = 0.5, \phi_{th} = \frac{\pi}{9}, \delta_{\phi} = \frac{\pi}{9},\delta_{d} = 1 $ m.  }
	\label{F:MAE-D}
\end{figure}

Fig.~\ref{F:MAE-D} shows the relationship between the MAE of LSM constructed by the channel measurement sampled from multiple circular flight trajectories around the projection of GBS in $\mathcal{X}_h $ and the radius difference of the neighbouring circular trajectories $ \delta_{D} $. And their interval of sample angles $ \delta_{\phi} $ remains constant.  It can be observed that the MAE performance of our proposed algorithm continues to improve as we decrease the sample interval $ \delta_{D} $ over the distance.

\begin{figure}[htbp]
	\centering
		\includegraphics[scale=0.60]{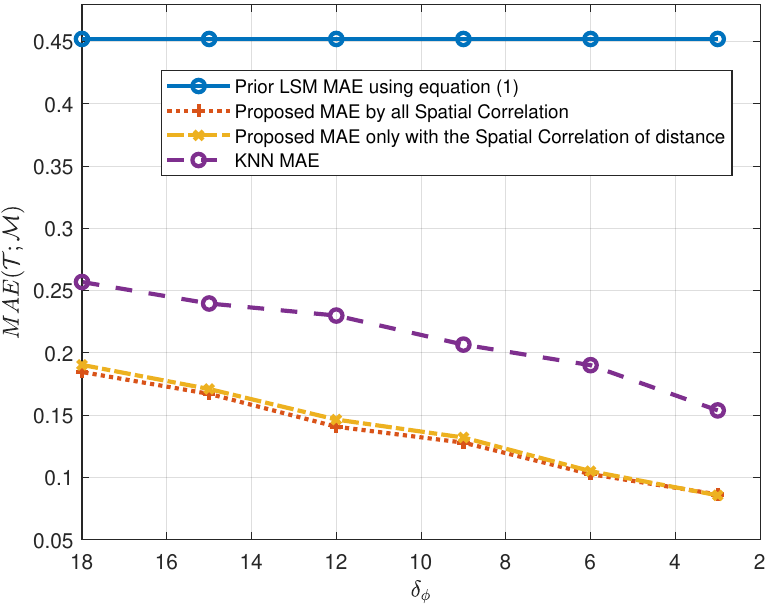}
	\caption{The performance of LSM construction algorithm $ MAE(\mathcal{T};\mathcal{M}) $ - The interval of sample angle $ \delta_{\phi} $,\\
		parameter setting: $ \beta = 0.5, \phi_{th} = \frac{\pi}{9}, \delta_{D} = 100$m, $ \delta_{d} = 1 $m. }
	\label{F:MAE-T}
\end{figure}

Further consider the MAE of LSM construction for the case where the radius of multiple circular flight trajectories around the projection of GBS is determined, while the angle interval $ \delta_{\phi} $ of sample is uncertain.
Fig.~\ref{F:MAE-T} shows the MAE performance of LSM construction algorithm versus the angle interval  $ \delta_{\phi} $ of sample. From Fig.~\ref{F:MAE-T}, we observe that LSM can be accurately reconstructed when using our proposed algorithm with the measurement settings meeting $ \delta_{\phi} = \frac{\pi}{60} $, $ \delta_{D} = 100 $m. After using the proposed algorithm, the MAE decreases from $ 0.15 $ under the KNN baseline to about $ 0.08 $, and the constructed LSM is very close to the ground-truth LSM.
And the gap between the MAE under exploiting the spatial correlation of distance and angle and that under only exploiting the spatial correlation of distance would gradually vanish as the sample interval $ \delta_{\phi} $ over the angle diminishes. 

\section{Conclusion}
This paper proposed a radio propagation spatial correlation-enhanced probabilistic LSM construction approach towards environment-aware communications and sensing.
The proposed approach has established an efficient binary Bayesian filter algorithm through environment spatial correlation to construct the probabilistic LSM. 
We further derived the spatial correlation model of the LoS probability for the location pairs on the distance and angular, respectively.
The numerical results demonstrate that our construction algorithm with environmental semantics is highly efficient for data reduction. 
 Some important lessons learnt
from this study are summarized below:
\begin{itemize}
	\item When the spatial correlation model gets accurate,
	it can improve the performance of LSM construction under a limited number of measurements. 
	\item The performance gain of the proposed binary Bayesian filter algorithm by exploiting the spatial correlation is larger when the number of directions with measurements is sparse.
	
\end{itemize}

Through this study, we have noticed that the accurate spatial correlation is essential for model-based CKM construction and the measurement location selection strategy has an impressive impact on the map construction quality. In the future,  more accurate spatial correlation model or environmental semantics will be considered to  construct other CKMs, such as channel gain map, beam index map, etc.   
The mutual information-based location selection will also be considered. Besides, fusing physical sensing measurement and channel measurements to construct CKM is also promising in the cellular-connected UAV scenarios, since it exploits the information from the multi-sources data. 


\begin{appendices}       
	\section{	Binary Bayesian filter for LSM mapping} \label{Appendix A}     
	  	Based on the Bayesian principle, we can write the posterior probability at the $ n $-th as
	\begin{equation} \label{append eq1}
		\begin{aligned}\Pr\left( l(\mathbf{x})\mid \textbf{Z}_{n} \right)=&\frac{\Pr(z_n\mid l(\mathbf{x}),\textbf{Z}_{n-1})\Pr\left(\left.l(\mathbf{x})\mid \textbf{Z}_{n-1}\right.\right)}{\Pr(z_n\mid \textbf{Z}_{n-1})} \\
			&=\frac{\Pr(z_n\mid l(\mathbf{x}))\Pr(l(\mathbf{x}) \mid \textbf{Z}_{n-1})}{\Pr(z_n\mid \textbf{Z}_{n-1})}.
		\end{aligned}
	\end{equation}
	The second "=" is due to the fact that we assume the measurement $ z_n $ is independent of the past measurements $ \mathbf{Z}_{n-1} $. i.e., conditional independence of the measurement, $ \Pr(z_n |l(\mathbf{x}),\mathbf{Z}_{n-1}) $ = $ \Pr(z_n | l(\mathbf{x})) $.
	 
	Apply Bayes' therom to $ \Pr(z_n\mid l(\mathbf{x})) $: 
	\begin{equation}\label{append eq2}
		\begin{aligned}
			\Pr(z_n\mid l(\mathbf{x}))=\frac{\Pr\left(l(\mathbf{x})\mid z_n\right)\Pr\left(z_n\right)}{\Pr(l(\mathbf{x}))}.
		\end{aligned}
	\end{equation}
	Subsituting to eq.\eqref{append eq1},  we have 
	\begin{equation} \label{append eq3}
		\begin{aligned}& 
			\Pr\left( l(\mathbf{x})\mid \textbf{Z}_{n} \right) \\
			&=\frac{\Pr\left(l(\mathbf{x})\mid z_n)\right)\Pr(z_n)\Pr\left( l(\mathbf{x})\mid \textbf{Z}_{n-1} \right)}{\Pr(l(\mathbf{x}))\Pr(z_n\mid \textbf{Z}_{n-1})}.
		\end{aligned}
	\end{equation}
	Similarly, for antagonistic events  $ \bar{l(\mathbf{x})}$ , we have 
		\begin{equation} \label{append eq4}
		\begin{aligned}
		&\Pr\left(\bar{l(\mathbf{x})}\mid \textbf{Z}_{n} \right) \\
		&=\frac{\Pr\left(\bar{l(\mathbf{x})} \mid z_n\right)\Pr(z_n)\Pr\left( \bar{l(\mathbf{x})}\mid \textbf{Z}_{n-1} \right)}{\Pr( \bar{l(\mathbf{x})})\Pr(z_n\mid \textbf{Z}_{n-1})}.
		\end{aligned}
	\end{equation}
	Eq. \eqref{append eq3} divided by eq. \eqref{append eq4} has
		\begin{align}
			&\frac{\Pr\left( l(\mathbf{x})\mid \textbf{Z}_{n} \right)}{ \Pr\left(\bar{l(\mathbf{x})}\mid \textbf{Z}_{n} \right)}\\ \nonumber
      & =\frac{\Pr\left(\left.l(\mathbf{x}) \right|z_n\right)}{\Pr\left(\left.\bar{l(\mathbf{x})}\mid z_n\right)\right.}\frac{\Pr\left(\left.l(\mathbf{x}) \right|\textbf{Z}_{n-1}\right)}{\Pr\left(\left.\bar{l(\mathbf{x})}\right|\textbf{Z}_{n-1}\right)}\frac{\Pr\left(\bar{l(\mathbf{x})}\right)}{\Pr\left(l(\mathbf{x})\right)}  \label{eq:binaryfilter} \\ 
			&=\frac{\Pr\left(l(\mathbf{x}) \mid z_n\right)}{1-\Pr\left(l(\mathbf{x}) \mid z_n\right)}\frac{\Pr\left(l(\mathbf{x}) \mid \textbf{Z}_{n-1}\right)}{1-\Pr\left(l(\mathbf{x}) \mid \textbf{Z}_{n-1}\right)}\frac{1-\Pr(l(\mathbf{x}))}{\Pr\left(l(\mathbf{x})\right)}. \nonumber
		\end{align}
	We then take the algorithm to get eq.\eqref{eq:recursiveLn}.
	 \section{The derivation of inverse measurement model $ \Pr(l(\mathbf{x})|z_n) $} \label{Appendix B}
	 The equation \eqref{eq:updateLocNM} can be computed as follows: 
	
	 	\begin{equation}\label{eq: not UAV position PDF}
	 		\begin{aligned}
	 			\Pr(l(\mathbf{x}) |z_n)
                =& \frac{\Pr(z_n,l(\mathbf{x}))}{\Pr(z_n)} \\
	 			=& \frac{ \sum_{l(\mathbf{x}_n)} \Pr(z_n, l(\mathbf{x}), l(\mathbf{x}_n))}{\Pr(z_n)} \\
	 			=& \frac{ \sum_{l(\mathbf{x}_n)} \Pr(z_n| l(\mathbf{x}), l(\mathbf{x}_n))\Pr(l(\mathbf{x}), l(\mathbf{x}_i))}{\Pr(z_n)} \\
	 			=& \frac{\Pr(z_n| l(\mathbf{x}), l(\mathbf{x}_n) = 1)p( l(\mathbf{x}), l(\mathbf{x}_n) =1) }{\Pr(z_n)} \\
    &+ \frac{\Pr(z_n| l(\mathbf{x}), l(\mathbf{x}_n) = 0)\Pr( l(\mathbf{x}), l(\mathbf{x}_n) =0)}{\Pr(z_n)}
    \\
	 			\overset{\text{ex}}{=}& \frac{\Pr(z_n| l(\mathbf{x}_n) = 1)\Pr( l(\mathbf{x}), l(\mathbf{x}_n) =1)}{\Pr(z_n)} \\
    &+ \frac{ \Pr(z_n| l(\mathbf{x}_n) = 0)\Pr( l(\mathbf{x}), l(\mathbf{x}_n) =0)}{\Pr(z_n)}
    \\
	 			=& \frac{\Pr( l(\mathbf{x}_n) = 1| z_n  )\Pr(l(\mathbf{x}), l(\mathbf{x}_n) = 1)}{\Pr(l(\mathbf{x}_n) = 1)}
	 			\\
                &+ \frac{\Pr( l(\mathbf{x}_n) = 0| z_n  )\Pr(l(\mathbf{x}), l(\mathbf{x}_n) = 0)}{\Pr(l(\mathbf{x}_n) = 0)} \\
	 			=& \Pr( l(\mathbf{x}_n) = 1| z_n  )\Pr(l(\mathbf{x})| l(\mathbf{x}_n) =1) \\
                &+ \Pr( l(\mathbf{x}_n) = 0|z_n) \Pr(l(\mathbf{x})| l(\mathbf{x}_n) = 0)).
	 		\end{aligned}
	 	\end{equation}
	 The fifth "$ \overset{\text{ex}}{=} $" of the above equation is due to the fact that the measurement $ z_n $ is only relevant to the link state $ l(\mathbf{x}_n) $ at its location, and is independent of other locations, i.e $ \Pr(z_n \mid l(\mathbf{x}),l(\mathbf{x}_n) = j) = \Pr(z_n \mid l(\mathbf{x}_n) = j), j = 0 $ or $ 1 $.
\section{The derivation of spatial correlation $ r_{ij}(\mathbf{x},\mathbf{x}_n) $} \label{Appendix C}
The spatial correlation of LoS probability $ r_{ij}(\mathbf{x},\mathbf{x}_n) $ can be divided into two cases:
\begin{itemize}
	\item The spatial correlation of location pairs with the same distance but different azimuth directions.
	\item  The spatial correlation of location pairs with the same azimuth directions but different distance.
\end{itemize}

For the first case, we need to derive the eq. \eqref{eq:CorrelationOnAngle}.
According to the radio propagation rule, we can easily obtain
\begin{equation}
	\begin{aligned}
		r_{11}(\mathbf{x},\mathbf{x}_n) = 1, r < r_n.  \\
		r_{00}(\mathbf{x},\mathbf{x}_n) = 1, r > r_n.
	\end{aligned}
\end{equation}
To calculate the other two formulas in \eqref{eq:CorrelationOnAngle}, we exploit the Bayesian principle to solve them. When $ r < r_n $, $ r_{00}(\mathbf{x}_n,\mathbf{x}) = 1 $
\begin{equation}
	\begin{aligned}
		r_{00}(\mathbf{x}_n,\mathbf{x}) &= \frac{\Pr(l(\mathbf{x})=0,l(\mathbf{x}_n)=0)}{\Pr(l(\mathbf{x}_n)=0)} \\
		&=\frac{r_{00}(\mathbf{x}_n,\mathbf{x})\Pr (l(\mathbf{x})=0)}{\Pr(l(\mathbf{x}_n)=0)} \\
		&=\frac{\Pr (l(\mathbf{x})=0)}{\Pr(l(\mathbf{x}_n)=0)}.
	\end{aligned}
\end{equation} 
When $ r > r_n $, $ r_{11}(\mathbf{x}_n,\mathbf{x})= 1 $,
\begin{equation}
	\begin{aligned}
				r_{11}(\mathbf{x}_n,\mathbf{x}) &= \frac{\Pr(l(\mathbf{x})=1,l(\mathbf{x}_n)=1)}{\Pr(l(\mathbf{x}_n)=1)} \\
		&=\frac{r_{11}(\mathbf{x}_n,\mathbf{x})\Pr (l(\mathbf{x})=1)}{\Pr(l(\mathbf{x}_n)=1)} \\
		&=\frac{\Pr (l(\mathbf{x})=1)}{\Pr(l(\mathbf{x}_n)=1)}.
	\end{aligned}
\end{equation}

For the second case, we need to derive the eq. \eqref{eq:Correlationondistance}.
The joint probability of $ l(\mathbf{x}) $ and $ l(\mathbf{x}_n) $ can be decomposed as 
\begin{align}\label{eq:jointtocondi}
	p_{ij}  = r_{ij}(\mathbf{x},\mathbf{x}_n) \Pr(l(\mathbf{x}_n)=j).
\end{align}

According to the definition \eqref{eq:phi} of phi coefficient and \eqref{eq:jointtocondi}, we have 
\begin{equation}\label{eq:phia}
	\rho = \frac{\Pr(l(\mathbf{x_n})=1)\Pr(l(\mathbf{x}_n)=0)(r_{11}r_{00} - r_{10}r_{01})}{\sqrt{\Pr(l(\mathbf{x})=0)\Pr(l(\mathbf{x}_n)=0)\Pr(l(\mathbf{x}_n)=1)\Pr(l(\mathbf{x})=1)} }.
\end{equation}
In addition, $ r_{ij} $ has the following equation,
\begin{equation}
	\begin{aligned}\label{eq:relation1}
		&r_{10} + r_{00} = 1; \\
		&r_{11} + r_{01} = 1; 
	\end{aligned}
\end{equation}
\begin{equation}
	\begin{aligned}\label{eq:relation2}
			r_{11} &= \frac{\Pr(l(\mathbf{x})=1,l(\mathbf{x}_n)=1)}{\Pr(l(\mathbf{x}_n)=1) } \\
		&= \frac{\Pr(l(\mathbf{x}) = 1)) - r_{10} \Pr(l(\mathbf{x}_n)=0)}{\Pr(l(\mathbf{x}_n)=1)}.
	\end{aligned}
\end{equation}

Finally, we take eq. \eqref{eq:relation1} and eq. \eqref{eq:relation2} into eq. \eqref{eq:phia}, and after a simple calculation we have eq. \eqref{eq:Correlationondistance}.

\end{appendices}

\bibliographystyle{IEEEtran}
\bibliography{CKM}

\end{document}